\keywords{Pebble game, Spooky pebble game, Quantum computing, Satisfiability}
\begin{document}

\title[Trade-offs between classical and quantum space]{Trade-offs between classical and quantum space\texorpdfstring{\\}{} using spooky pebbling\rsuper*}
\titlecomment{{\lsuper*}This paper is an extension of conference proceedings~\cite{quist2023optimizing}. An overview of contributions with respect to the previous version are in the `Contents and contributions' paragraph in the introduction. One should be aware of the change of terminology from ``spooks" (previous version) to ``ghosts" (this work).}

\author[A.~Quist]{Arend-Jan Quist\lmcsorcid{0000-0002-6501-2112}}
\author[A.W.~Laarman]{Alfons Laarman\lmcsorcid{0000-0002-2433-4174}}

\address{Leiden Institute of Advanced Computer Science, Leiden University, The Netherlands}	%
\email{\{a.quist, a.w.laarman\}@liacs.leidenuniv.nl}  %

\begin{abstract}
  \noindent Pebble games are used to study space/time trade-offs. Recently, spooky pebble games were introduced to study classical space / quantum space / time trade-offs for simulation of classical circuits on quantum computers. In this paper, the spooky pebble game framework is applied for the first time to general circuits. Using this framework we prove an upper bound for quantum space in the spooky pebble game. We also prove that solving the spooky pebble game is \PSPACE-complete. Moreover, we present a solver for the spooky pebble game based on satisfiability solvers combined with heuristic optimizers. This spooky pebble game solver was empirically evaluated by calculating optimal classical space / quantum space / time trade-offs. Within limited runtime, the solver could find a strategy reducing quantum space when classical space is taken into account, showing that the spooky pebble model is useful to reduce quantum space. 
\end{abstract}

\maketitle

\section{Introduction}

For a long time, scientists have been thinking how specific problems could be calculated within constraints on computational resources, such as size of the memory and runtime of the calculation. %
To study the memory usage and runtime for a calculation, pebble games were introduced. Pebble games model the use of space and time in the run of a given circuit. Trade-offs between space and time can also be easily studied by a pebble game by studying the maximum number of pebbles and time used. Such trade-offs are a fundamental problem in Computer Science~\cite{aldaz2000time,bennett1989time,swamy1978requirements,valiant1976space}\cite[Chapter 10]{savage2008models}.

\paragraph{Pebble games} In a given circuit, a pebble game is played on its underlying directed acyclic graph (DAG) structure. Each gate within the circuit corresponds to a node in this graph, with the direct input/output relationships between gates represented by directed edges. %
Every node in the graph can be pebbled or unpebbled. Pebbling a node models calculating and storing the output of the corresponding gate in memory, while unpebbling models freeing memory. %
	The primary objective of the pebble game is to pebble the entire circuit, achieved by pebbling the output gates.
Analyzing trade-offs between memory usage and runtime entails considering all strategies for pebbling the circuit. The space used by a strategy corresponds to the maximum number of pebbles employed at any given time, while the runtime is determined by the number of (un)pebble moves making up the strategy.%

There exist variations of the pebble game to model different types of computing, see for example \cite{chan2013just,nordstrom2015new} for a (non-complete) overview. In this paper, we consider three types of pebble games: irreversible pebble games, reversible pebble games and spooky pebble games. For the \textit{irreversible pebble game} all inputs of a node must be pebbled to pebble that node, but there are no constraints for unpebbling a node. This is a model for classical computing. For the \textit{reversible pebble game}, introduced by Bennett \cite{bennett1989time}, all inputs of a node must be pebbled to either pebble or unpebble that node. This is a model for reversible simulation of irreversible circuits in general and, more specific, for simulation of irreversible circuits on a quantum computer~\cite{li1998reversible}.%

\paragraph{Spooky pebble game} The \textit{spooky pebble game} is an extension of the reversible pebble game for quantum computing which was introduced recently by Gidney \cite{Gidney2019spooky} and more extensively worked out by Kornerup et al \cite{kornerup2021spooky,kornerup2024tight}. This game weakens the constraint for unpebbling a node in reversible pebbling at the cost of doing a quantum measurement. In the spooky pebble game, this is represented by a ghost, which replaces a pebble and classically stores the measurement outcome. The (classical) value of the measurement outcome is used later by the quantum computer to restore the state. Thus for the spooky pebble game we can study the trade-off between classical memory, quantum memory and time. As quantumly accessible classical space (represented by ghosts) is assumed to be much cheaper than quantum space (represented by pebbles) (see e.g. \cite{babbush2018encoding,park2019circuit,peikert2020he}), studying such a trade-off could be very useful for memory reduction in quantum computing. %

The spooky pebble game, similar to reversible pebbling, models running an irreversible (classical) computation on a quantum computer with as input a superposition. Therefore, it is a useful model for a quantum oracle calculation of a classical function. %
In many famous quantum algorithms like Shor's \cite{shor1994algorithms} and Grover's \cite{grover1996fast} such oracle calculations are essential. Thus, studying the spooky pebble game can be useful for quantum memory management for near term quantum computing on machines with limited space and time.

\paragraph{Contents and contributions} This paper is an extended version of~\cite{quist2023optimizing}. Its main contributions with respect to the previous version are 
\begin{itemize}
    \item a \PSPACE-completeness proof for solving the spooky pebbling game,\footnote{A similar result was also proven by~\cite{kornerup2024tight} after the first preprint of our work was published.}
    \item optimization algorithms for simplifying spooky pebble game instances, and
    \item a solver implementation that combines SAT solving with heuristic optimization.
\end{itemize}
In the previous version of this paper, the spooky pebble game solver consisted of a SAT solver only. The strength of SAT solvers is being very fast at finding \textit{a} solution for the game, which is not necessarily an \textit{optimal} solution. The new spooky pebble game solver is an iterative combination of SAT solving and optimization of their solutions by heuristics in a way that these complementary approaches mutually reinforce each other. The extensive experimental results show improvements to the SAT-only solvers by up to 39~percent. 

The remainder of this paper is organized as follows. In Section~\ref{sect:theory}, we explain the theory about the spooky pebble game. Note that we are the first that study the spooky pebble game on general DAGs instead of linegraphs as in \cite{kornerup2021spooky}\footnote{The results on general DAGs by~\cite{kornerup2024tight} were published after the first preprint of our work was published.}. %
In Section~\ref{sect:theorem}, we prove a theorem about the maximum memory cost for a quantum oracle computation with respect to the equivalent classical computation when there is also classical memory available. Moreover, we show that solving the spooky pebble game is \PSPACE-complete. Section~\ref{sect:pebblegameSolver} describes a solver of the spooky pebble game developed by us. 
To our knowledge this is the first solver for the spooky pebble game. In Section~\ref{sect:experiment}, the details of our open source solver implementation are presented. This section also describes experiments to create optimal quantum space, classical space and time trade-offs with the solver. %

\FloatBarrier
\section{Background: spooky pebble game}
\label{sect:theory}

In this section, we first explain measurement-based uncomputation for quantum computing, which is the basis for the spooky pebble game. Then we formally introduce (spooky) pebble games. Finally, we connect measurement-based uncomputation and spooky pebble games. Most of this section was also introduced by \cite{Gidney2019spooky,kornerup2021spooky,kornerup2024tight}. 

Any reader that is only interested in pebble games but not familiar with quantum computing can safely skip Sections~\ref{subsec:mmbasec-uncomputation}~and~\ref{subsec:connection___spookyPebbling--mmbasedUncomputation}. The main message that such skipping reader should be aware of is that in a spooky pebble game every pebble models a qubit, every ghost models a classical bit, and every move models a (semi-)quantum operation. As qubits are expensive with respect to classical bits, we usually want to reduce the number of pebbles in a game. 

\subsection{Measurement-based uncomputation}
\label{subsec:mmbasec-uncomputation}

Assume we want to simulate a classical algorithm on a superposition of classical inputs on a quantum computer. In quantum computing, the only operations one can apply are reversible operations and (possibly irreversible) measurements. Therefore, to apply irreversible (classical) operations additional qubits (ancillae) are needed. A quantum algorithm can produce a lot of garbage qubits this way. These ancillae qubits cannot be trivially removed as measuring them can destroy the superposition. Therefore, reversible uncomputation is needed to clean the ancillae.

Measurement-based uncomputation is a technique to do an irreversible uncomputing operation using a measurement such that, under certain conditions, the uncollapsed state can be restored using the outcome of the measurement. This technique can be applied for quantum oracles where the quantum input remains in memory during the entire computation and no interference gates are applied in between the measurement-based uncomputation and the restoring of the state. These conditions are satisfied when we quantumly simulate a classical oracle computation, using only ((multi-)controlled) NOT gates, which we focus on in this paper.

Figure~\ref{fig:measurement_based_uncomputation_explanation} depicts the idea of measurement-based uncomputation in a circuit. Assume that an irreversible function $f$ is computed on some quantum superposition input $\sum_x \alpha_x\ket{x}$. This is done by the reversible operation $U_f$ which writes the outcome in a new register, resulting in the state $\sum_x \alpha_x\ket{x}\ket{f(x)}$. Now, (a part of) the input is changed by an unknown but non-interfering computation, while keeping the function output in memory. If we want to remove the function output from the memory, we cannot simply uncompute the function output by applying $U_f$ again as the input has gone. Instead the function output is measured after applying a Hadamard gate to it. On measurement outcome~$b$, a phase $(-1)^{bf(x)}$ is added to every basis state of the superposition. After computing the semi-classical CNOT dependent on the measurement outcome, the memory of the function output is free and can be reused used as a $\ket{0}$ ancilla by other computations. If we want to correct for the phase $(-1)^{bf(x)}$, we need the input $\sum_x \alpha_x\ket{x}$ again and compute the function $f$. Now we can apply a $Z$-gate on the function output register, dependent on the measurement outcome $b$. Now, the function output can be uncomputed by applying $U_f$ again, and we are back in the state with input $\sum_x \alpha_x\ket{x}$.

Note that this trick of temporary uncomputing the function output of an irreversible function $f$ when the inputs are not available is not possible in usual reversible computing. Hence, measurement-based uncomputation can be seen as quantum semi-reversible computation.

For a more formal and extensive description of measurement-based uncomputation, we refer to \cite{kornerup2021spooky}.

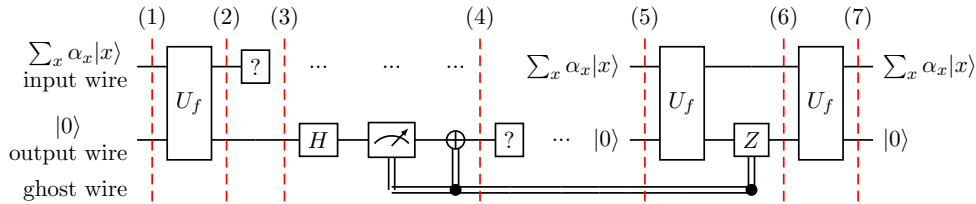
\begin{figure}[tbh]
    \centering
    \begin{tikzpicture}
    \node[scale=0.8]{
    \begin{quantikz}
    \lstick{$\sum_x \alpha_x |x\rangle$\\ input wire} \qw \slice{(1)} & \gate[wires=2]{U_f} \slice{(2)}& \gate{?} \slice{(3)} %
    & \dots & \dots & \dots \slice{(4)} & & & & & & \lstick{$\sum_x (-1)^{bf(x)} \alpha_x |x\rangle$}\slice{(5)} &  \gate[wires=2]{U_f} & \qw \slice{(6)} &  \gate[wires=2]{U_{f}} \slice{(7)}  & \qw \rstick{$\sum_x \alpha_x |x\rangle$} \\
    \lstick{$|{0}\rangle$\\ output wire} \qw &  & \qw &\gate{H} & \meter{}\vcw{1} & \targ{} & \gate{?} & & \dots & & & \lstick{$|{0}\rangle$} &  & \gate{Z} &  & \qw \rstick{$|{0}\rangle$}\\
    \lstick{ghost wire} &  & & & & \cwbend{-1} & \cw & \cw & \cw & \cw & \cw & \cw & \cw & \cwbend{-1} & &
    \end{quantikz}
    }; %
    \end{tikzpicture}
    \caption{This circuit describes measurement-based uncomputation. The three wires depict the input and output of the function $f$ and the ghost wire to classically store the measurement outcome. The labels (1), (2), etc correspond to the corresponding spooky pebbling configurations as shown in Figure~\ref{fig:spooky_pebble_configs-mmbased-uncomp}.}
    \label{fig:measurement_based_uncomputation_explanation}
\end{figure}

\subsection{Pebble games}

In this section, we formally define three types of pebble games: irreversible, reversible and spooky. 

We first define the irreversible and reversible pebble game. The irreversible pebble game is a natural model for classically computing a circuit. The reversible pebble game is a natural model for reversible computation on irreversible circuits, e.g., quantumly computing a classical circuit.

\begin{defi}[(Ir)reversible pebbling]
\label{def:irreversible_pebbling}
Let $G = (V,E)$ be a directed acyclic graph (DAG). The set of \emph{roots} $R$ is defined as $R=\{v\in V \mid \textit{out-degree}(v)=0\}$.
A \emph{(ir)reversible pebbling strategy} on $G$ is a sequence of sets of pebbled vertices $P_0,P_1,\dots,P_T\subseteq V$ such that the following conditions hold:
\begin{enumerate}
    \item $P_0 = \emptyset$;
    \item $P_T = R$;
    \item for every $t\in \{1,2,\dots,T\}$ there exists one $v\in V$ such that one of the following holds:
    \begin{itemize}
        \item \emph{pebble($v$):} $P_t=P_{t-1}\cup\{v\}$ and all direct predecessors (i.e. in-nodes) of $v$ are in $P_{t-1}$;
        \item \emph{unpebble($v$):} \\For \emph{irreversible} pebbling: $P_t=P_{t-1}\backslash\{v\}$;\\
        For \emph{reversible} pebbling: $P_t=P_{t-1}\backslash\{v\}$ and all direct predecessors (i.e. in-nodes) of $v$ are in $P_{t-1}$.
    \end{itemize}
\end{enumerate}
\end{defi}

Now, we will define the spooky pebble game, a natural model for quantumly computing a classical circuit using measurement-based uncomputation. The definition for this game is mainly adapted from \cite{kornerup2021spooky}. The spooky pebble game can be viewed as an extended generalization of the (ir)reversible pebble game. The main difference between spooky pebbling and (ir)reversible pebbling is the addition of ghosts. To regulate the ghosts, the actions \emph{ghost} and \emph{unghost} are added, and for the actions \emph{pebble} and \emph{unpebble} the requirement that no ghost is changed is added. 

\begin{defi}[Spooky pebbling]
\label{def:spooky_pebbling}
Let $G = (V,E)$ be a directed acyclic graph (DAG). The set of \emph{roots} $R$ is defined as $R=\{v\in V \mid \textit{out-degree}(v)=0\}$. 
A \emph{spooky pebbling strategy} on $G$ is a sequence of pairs of pebbles and ghost pebbles $(P_0,S_0),(P_1,S_1),\dots,(P_T,S_T)\subseteq V\times V$ such that the following conditions hold:
\begin{enumerate}
    \item $P_0 = \emptyset$ and $S_0 = \emptyset$;
    \item $P_T = R$ and $S_T = \emptyset$;
    \item for every $t\in \{1,2,\dots,T\}$ there exists one $v\in V$ such that one of the following holds:
    \begin{itemize}
        \item \emph{pebble($v$):} $P_t=P_{t-1}\cup\{v\}$ and $S_t=S_{t-1}$ and all direct predecessors (i.e. in-nodes) of $v$ are in $P_{t-1}$;%
        \item \emph{unpebble($v$):} $P_t=P_{t-1}\backslash\{v\}$ and $S_t=S_{t-1}$ and all direct predecessors (i.e. in-nodes) of $v$ are in $P_{t-1}$;%
        \item \emph{ghost($v$):} $P_t=P_{t-1}\backslash\{v\}$ and $S_t=S_{t-1}\cup\{v\}$;%
        \item \emph{unghost($v$):} $P_t=P_{t-1}\cup\{v\}$ and $S_t=S_{t-1}\backslash\{v\}$ and all direct predecessors (i.e. in-nodes) of $v$ are in $P_{t-1}$.\footnote{Note that a pebble is placed on $v$ when $v$ is unghosted.}
    \end{itemize}
\end{enumerate}
\end{defi}

We define a \emph{spooky pebbling sub-strategy} as a spooky pebbling strategy that does not necessarily obey conditions (1) and (2). We will use spooky pebbling sub-strategies as subroutines of a spooky pebbling strategy.

For a pebbling (sub-)strategy on a DAG $G$, the following three quantities are useful. The \emph{pebbling time} is the integer $T$. The \emph{pebbling cost} is $\max_{t\in[T]}|P_t|$ and (for the spooky pebbling game) the \emph{ghosting cost} is $\max_{t\in[T]}|S_t|$.

In Figure~\ref{fig:examples_of_pebbling_steps}, we see some examples of moves for the different types of pebble games. For the configuration on the top, the move \emph{unpebble(E)} can be applied for all irreversible, reversible and spooky pebbling. The move \emph{unpebble(D)} can only be applied in the irreversible pebble game, as its input vertex A is not pebbled. The move \emph{ghost(D)} can only be applied in the spooky pebble game, as the (ir)reversible pebble game does not have ghosts.
\begin{figure}[tbh]
    \centering
    \begin{tikzpicture}[scale=.9, every node/.style={scale=.9}]
    \input{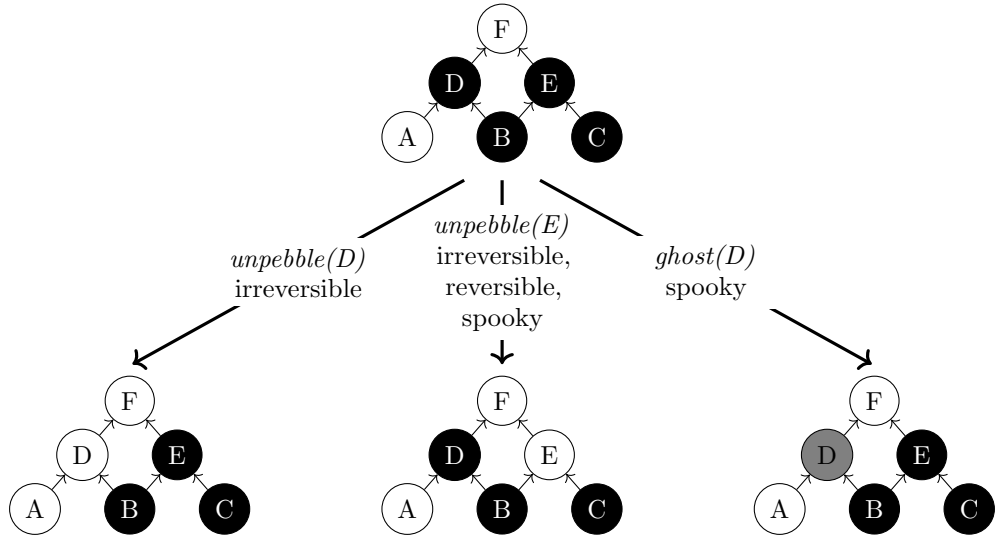}
    \end{tikzpicture}
    \caption{Three examples of steps for different types of pebble games. A white node represents an empty vertex, a black vertex is a pebbled vertex and a grey vertex is a ghosted vertex.}
    \label{fig:examples_of_pebbling_steps}
\end{figure}

\subsection{Connection between spooky pebble game and measurement-based uncomputation}
\label{subsec:connection___spookyPebbling--mmbasedUncomputation}

\begin{figure}[tbh]
    \centering
    \begin{tikzpicture}[scale=1.0, every node/.style={scale=1.0}]
    \input{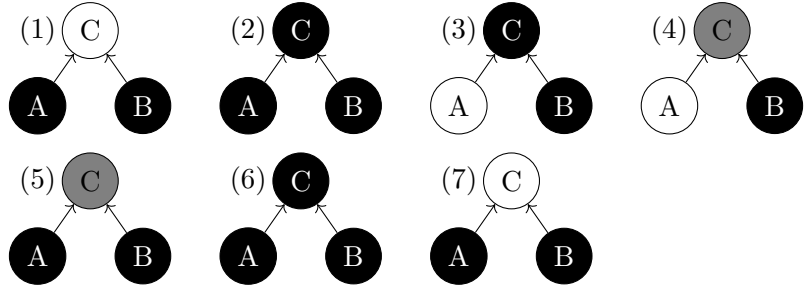}
    \end{tikzpicture}
    \caption{Spooky pebbling configurations for the labels in the circuit of Figure~\ref{fig:measurement_based_uncomputation_explanation}. In Definition~\ref{def:spooky_pebbling}, the spooky pebble game is introduced. The function inputs in vertices (A) and (B) are used to compute a function $f$ and store the output in vertex (C). In other words: $f\left((A),(B)\right)=(C)$. A white vertex is unpebbled, a black one is pebbled and a grey vertex is ghosted. In this example not the entire input is changed: vertex (B) is pebbled (i.e. hold in memory) all over the computation and only vertex (A) is unpebbled (i.e. removed from memory).}
    \label{fig:spooky_pebble_configs-mmbased-uncomp}
\end{figure}
In Figures~\ref{fig:measurement_based_uncomputation_explanation}~and~\ref{fig:spooky_pebble_configs-mmbased-uncomp}, the relation between the spooky pebble game and mea\-sure\-ment-based uncomputation is depicted. Quantum space is represented by pebbles and ghosts represent classical space. Like in the reversible pebble game, pebbling and unpebbling are just applying classical (irreversible) gates on a quantum computer, storing the gate-outputs in additional space. Placing a ghost can be interpreted as applying measurement-based uncomputation and storing the measurement outcome in memory. Unghosting can be interpreted as restoring the uncomputed state by recomputing the function.

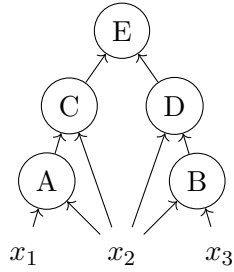
\begin{figure}[tbh]
    \centering
    \begin{tikzpicture}
        \node[circle,draw] at (-1,0) (A) {A};
        \node[circle,draw] at (1,0) (B) {B};
        \node[circle,draw] at (-.7,1) (C) {C};
        \node[circle,draw] at (.7,1) (D) {D};
        \node[circle,draw] at (0,2) (E) {E};

        \node[circle] at (-1.3,-1) (x1) {$x_1$};
        \node[circle] at (0,-1) (x2) {$x_2$};
        \node[circle] at (1.3,-1) (x3) {$x_3$};

        \draw[->] (A) -- (C);
        \draw[->] (C) -- (E);
        \draw[->] (B) -- (D);
        \draw[->] (D) -- (E);
        \draw[->] (x1) -- (A);
        \draw[->] (x2) -- (A);
        \draw[->] (x2) -- (B);
        \draw[->] (x2) -- (C);
        \draw[->] (x2) -- (D);
        \draw[->] (x3) -- (B);
    \end{tikzpicture}
    \caption{Classical circuit on input bits $x_1,x_2,x_3$. The gates A,B,C,D,E are classical gates, e.g. AND, OR.}
    \label{fig:example-irreversible-circuit}
\end{figure}

\tikzstyle{plus}=[fill=white, draw=black, shape=circle, plusss]
\tikzstyle{new style 0}=[shape=rectangle]
\tikzstyle{dtt}=[-, dotted]
\tikzstyle{red}=[-, draw=red]
\tikzstyle{red dtt}=[-, draw=red, dotted]
\tikzstyle{none}=[]
\tikzstyle{doubleline}=[-, double]
\begin{figure}[tbh]
    \centering
    \input{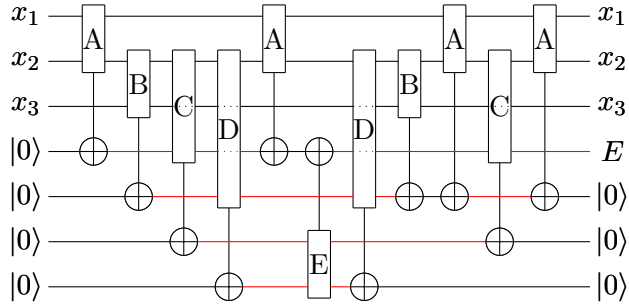}
    \caption{Quantum circuit to compute the circuit from Figure~\ref{fig:example-irreversible-circuit} on a superposition on a quantum computer without measurement-based uncomputation.\\
    The reversible pebbling strategy that leads to this circuit consists of the following consecutive moves: pebble(A), pebble(B), pebble(C), pebble(D), unpebble(A), pebble(E), unpebble(D), unpebble(B), pebble(A), unpebble(C), unpebble(A).\\
    This strategy has a pebbling cost of 4.}
    \label{fig:compiled-circuit-reversible}
\end{figure}

\begin{figure}[tbh]
    \centering
    \input{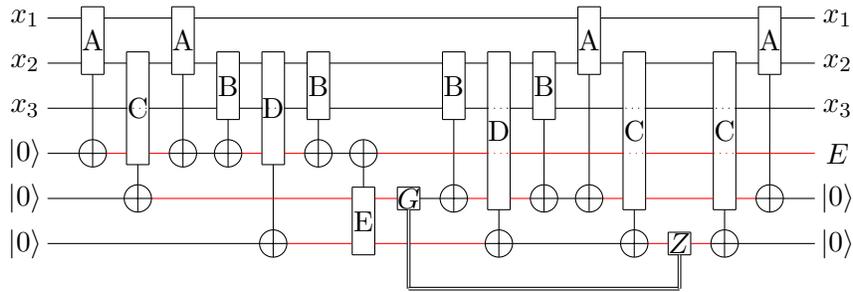}
    \caption{Quantum circuit to compute the circuit from Figure~\ref{fig:example-irreversible-circuit} on a superposition on a quantum computer using measurement-based uncomputation. Gate $G$ is a $H$-gate followed by a measurement in the $Z$-basis and a classically-controlled correction.\\
    The spooky pebbling strategy that leads to this circuit consists of the following consecutive moves: pebble(A), pebble(C), unpebble(A), pebble(B), pebble(D), unpebble(B), pebble(E), ghost(C), pebble(B), unpebble(D), unpebble(B), pebble(A), unghost(C), unpebble(C), unpebble(A).\\
    This strategy has a pebbling cost of 3 and a ghosting cost of 1.}
    \label{fig:compiled-circuit-spooky}
\end{figure}

Assume we want to simulate a classical algorithm on a superposition of inputs on a quantum computer, for example the classical circuit in Figure~\ref{fig:example-irreversible-circuit}. A reversible or spooky pebbling strategy now corresponds directly to a simulation of the classical algorithm on a quantum computer. Figures~\ref{fig:compiled-circuit-reversible}~and~\ref{fig:compiled-circuit-spooky} show two such examples of quantum circuits that simulate the classical circuit of Figure~\ref{fig:example-irreversible-circuit}.

Note that for~Figure~\ref{fig:example-irreversible-circuit} the pebbling cost for the reversible pebble game is at least 4, while the pebbling cost for the spooky pebble game is 3. This shows that measurement-based uncomputation (i.e. the spooky pebble game) can reduce the number of qubits (pebbles) that is needed to simulate the circuit of~Figure~\ref{fig:example-irreversible-circuit} on a quantum superposition.

\FloatBarrier
\section{Spooky pebble cost equals irreversible pebble cost\texorpdfstring{\\}{} Spooky pebble game is \PSPACE-complete}
\label{sect:theorem}

In this section, we show that deciding whether a DAG can be spooky pebbled using $P$ pebbles is \PSPACE-complete answering the open question of~\cite{kornerup2021spooky}. For the irreversible~\cite{gilbert1979pebbling} and reversible~\cite{chan2015hardness} pebble game, this problem is already known to be \PSPACE-complete. A reduction from the irreversible pebble game is used to show \PSPACE-completeness for the spooky pebble game.

\begin{thm}
\label{thm:spooky-pebble-game=PSPACE-complete}
    The decision problem for the existence of a spooky pebbling strategy for a DAG $G$ with $P$ pebbles is \PSPACE-complete.
\end{thm}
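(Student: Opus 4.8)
The plan is to establish the two halves of \PSPACE-completeness separately: membership in \PSPACE, and \PSPACE-hardness by reduction from the irreversible pebble game.

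For \emph{membership}, I would phrase solvability as a reachability question in the configuration graph of the game. A configuration is a pair $(P,S)$ of disjoint vertex sets, which takes only $\mathcal{O}(|V|)$ space to store, and the legality of a single move between two configurations (one of pebble, unpebble, ghost, unghost, respecting the predecessor conditions and the budget $|P_t|\le P$) is checkable in polynomial time. A winning strategy is exactly a path from $(\emptyset,\emptyset)$ to $(R,\emptyset)$ that never violates the budget. Since the configuration graph has at most $3^{|V|}$ nodes, a shortest such path may be assumed to have length at most $3^{|V|}$, so a nondeterministic procedure can guess the moves one at a time while storing only the current configuration and a step counter of $\mathcal{O}(|V|)$ bits. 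This puts the problem in \NPSPACE, and Savitch's theorem ($\NPSPACE=\PSPACE$) gives membership in \PSPACE.

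For \emph{hardness}, I would reduce from the irreversible pebble game, which is \PSPACE-complete~\cite{gilbert1979pebbling}. The intended reduction is essentially the identity on instances $(G,P)$, and its correctness rests on the claim (the companion result of this section) that the minimum spooky pebble cost of $G$ equals its minimum irreversible pebble cost. One inequality is immediate and clean: given any spooky strategy, replacing every \emph{ghost} move by \emph{unpebble} and every \emph{unghost} move by \emph{pebble} yields a sequence in which each step is a legal irreversible move---the predecessor condition on \emph{unghost} matches the one for \emph{pebble}, and irreversible \emph{unpebble} is unconstrained---with exactly the same pebble sets $P_t$ and hence the same cost. Thus the irreversible cost never exceeds the spooky cost, giving one direction of the required equivalence.

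The main obstacle is the reverse inequality, which is precisely what makes the reduction go through in the other direction: from an irreversible strategy using $P$ pebbles I must build a spooky strategy using no more than $P$ pebbles. The difficulty is that irreversible unpebbling is free, whereas freeing the same memory with a \emph{ghost} incurs a debt---every ghost must eventually be discharged by an \emph{unghost}, which both requires the predecessors to be pebbled and returns a pebble---so a careless cleanup can overshoot the budget. My plan is to simulate the irreversible strategy step by step, turning each predecessor-violating unpebble into a ghost, and then to discharge the accumulated ghosts by recomputing the relevant vertices in reverse dependency order while reusing the pebbles freed along the way, so that the simultaneous pebble count never exceeds $P$. Exhibiting such a schedule is the crux; should the direct simulation fail to stay within budget on some graphs, the fallback is to precede the reduction by a gadget transformation of $G$ that leaves the irreversible cost unchanged while rendering ghost moves cost-neutral, thereby forcing the two pebble numbers to coincide on the transformed instance. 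Either way, once the equality of the pebble numbers is secured the reduction is polynomial time, and combined with the membership argument this yields \PSPACE-completeness.
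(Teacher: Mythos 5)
Your membership argument is correct and coincides with the paper's (Note~\ref{obs:spooky-pebble-game_in_PSPACE}): store one configuration, guess moves nondeterministically, and invoke Savitch's theorem. Your first inequality for hardness is also sound: mapping every \emph{ghost} to \emph{unpebble} and every \emph{unghost} to \emph{pebble} turns a spooky strategy into an irreversible one with the same pebble sets, so the irreversible cost never exceeds the spooky cost. The genuine gap is the claim on which your reduction rests, namely that the two costs are \emph{equal}, so that the reduction can be the identity on $(G,P)$. This is false. Already the directed path $v_1\to v_2\to v_3$ is a counterexample: it is irreversibly pebblable with $2$ pebbles, but an exhaustive check of the (small) configuration space shows that no spooky strategy with $2$ pebbles can terminate --- once the root is pebbled, discharging the remaining ghost or removing the remaining pebble on $v_2$ requires re-pebbling $v_1$ while $v_3$ keeps its pebble, which forces a third pebble. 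This is precisely why the paper's Theorem~\ref{thm:spooky-irreversible-nr.pebble-equivalence} proves the bound $C+m$ (with $m$ the number of roots) rather than $C$: the root pebbles cannot be released during ghost cleanup. So your primary plan, a direct simulation that stays within the budget $P$, cannot succeed in general, and no rescheduling of the cleanup will repair it; the identity reduction maps yes-instances such as the $3$-path with $P=2$ to no-instances.

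Your fallback --- precede the reduction by a gadget transformation --- is the right instinct, but it is the entire technical content of the hardness proof and is absent from your proposal. What the paper actually does (Lemma~\ref{lemma:G-diamond-P--exists} and Theorem~\ref{thm:spooky=QBF}) is attach a polynomial-size pyramid-like DAG $G^\diamond_P$ on top of the single output $q$ of the graph $G$ produced by the QBF-to-irreversible-pebbling reduction of Gilbert et al. The gadget is engineered so that it needs $P+1$ pebbles on its own, so that every spooky strategy must at some moment hold $P+1$ pebbles on its non-input nodes and still re-pebble $q$ afterwards (thereby forcing an irreversible-style pebbling of $G$ with only $P$ pebbles), and so that every non-output vertex is irreversibly pebblable with $P$ pebbles, which lets the accumulated ghosts be cleaned via Lemmas~\ref{lemma:irrev->spooky--pebbling_root} and~\ref{lemma:irrev->spooky--cleaning_graph}. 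Note that on the transformed instance the two pebble numbers still do not coincide --- $G'$ is spooky-pebblable with $P+1$ pebbles iff $G$ is irreversibly pebblable with $P$ --- which is perfectly acceptable for a reduction but contradicts your stated goal of forcing them to be equal. Without an explicit gadget and a proof of these forcing properties, the hardness half of your argument is not established.
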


Before we prove~Theorem~\ref{thm:spooky-pebble-game=PSPACE-complete}, we present results relating the pebbling cost of different types of pebble games presented in the previous section. This is a kind of intermezzo that is not needed to show~Theorem~\ref{thm:spooky-pebble-game=PSPACE-complete}, but the result follows directly from Lemma~\ref{lemma:irrev->spooky--cleaning_graph}~and~\ref{lemma:irrev->spooky--pebbling_root} which we use later to prove Theorem~\ref{thm:spooky-pebble-game=PSPACE-complete}. We show first that if a DAG with $m$ roots (outputs) can be irreversible pebbled with $C$ pebbles, then it can be spooky pebbled with at most $C+m$ pebbles. This result is somewhat surprising as in the irreversible strategy we can unpebble at any time. This shows the power of ghosts with respect to the reversible pebble game where a similar result does not hold.\footnote{This is easy to see, as \cite[Theorem 2]{li1998reversible} shows that reversibly pebbling a line graph needs at least logarithmic pebbles, but, trivially, irreversible pebbling of a line graph needs a constant number of pebbles.} Intuitively, this result holds because the irreversible pebble game can be simulated by the spooky pebble game with additional pebbles at all roots. Gidney~\cite{Gidney2019spooky} already proved a similar result for line graphs where an obvious optimal irreversible pebble strategy with $C=2$ pebbles can be converted to a spooky pebble strategy with $C+m=2+1=3$ pebbles, but we generalize this to arbitrary DAGs and arbitrary irreversible pebbling strategies. The statement can be formalized as follows.

\begin{thm}\label{thm:spooky-irreversible-nr.pebble-equivalence}
Let $G=(V,E)$ be a DAG with $m$ roots. Assume there exists an irreversible pebbling strategy to pebble $G$ with irreversible pebbling cost $C$ and pebbling time $T$. Then there exists a spooky pebbling strategy to pebble $G$ with spooky pebbling cost at most $C+m$ and pebbling time $T+(T+1)(|V|-m)$.  %
\end{thm}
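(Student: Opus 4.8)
The plan is to convert the given irreversible strategy $P_0,\dots,P_T$ into a spooky strategy in two passes: a \emph{forward pass} that reaches a configuration in which all $m$ roots carry pebbles (possibly leaving ghosts on non-root vertices), and a \emph{cleanup pass} that removes all remaining ghosts while keeping the root pebbles in place. This mirrors the split into Lemmas~\ref{lemma:irrev->spooky--pebbling_root} and~\ref{lemma:irrev->spooky--cleaning_graph}.

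\emph{Forward pass.} I would process the $T$ irreversible moves in order, maintaining the invariant that the spooky pebble set equals $P_t$ at every step; this keeps the pebbling cost exactly $C$. Each move is translated locally: \emph{pebble($v$)} becomes a spooky \emph{pebble($v$)} when $v$ is currently unpebbled, and an \emph{unghost($v$)} when $v$ is currently ghosted (the latter is legal because the irreversible pebble move guarantees that all predecessors of $v$ lie in $P_{t-1}$, hence are pebbled under the invariant), while \emph{unpebble($v$)} becomes a spooky \emph{ghost($v$)} (always legal, since ghosting has no predecessor requirement). A vertex is ghosted at the end of this pass only if its last recorded action was an unpebble never followed by a re-pebble; such a vertex is unpebbled in $P_T=R$ and is therefore a non-root. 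Hence after $T$ moves the spooky pebbles are exactly $R$, the ghost set $S$ satisfies $S\subseteq V\setminus R$ (so $|S|\le |V|-m$), and the pebbling cost so far is $C$.

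\emph{Cleanup pass.} Starting from the configuration (pebbles $=R$, ghosts $=S$), I would replay the same irreversible strategy a second time, now keeping all $m$ root pebbles present throughout — this is where the extra $m$ pebbles, and thus the bound $C+m$, come from — while maintaining the invariant that the spooky pebble set equals $P_t\cup R$. The translation is: any move touching a root is skipped (the invariant already accounts for roots); \emph{pebble($v$)} for a non-root $v$ becomes \emph{unghost($v$)} if $v$ is ghosted and \emph{pebble($v$)} otherwise; and \emph{unpebble($v$)} for a non-root $v$ becomes a genuine spooky \emph{unpebble($v$)}, which is legal because its predecessors lie in $P_{t-1}\subseteq P_{t-1}\cup R$. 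Every $v\in S$ was pebbled at some point in the strategy, so the first occurrence of \emph{pebble($v$)} during the replay triggers an \emph{unghost($v$)} and permanently clears its ghost; since the cleanup never creates ghosts, the ghost set is empty at the end. The invariant gives $|P_t\cup R|\le C+m$, so the cost stays within $C+m$, and the number of moves in this pass is at most the number of non-root moves of the strategy, hence at most $T$.

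Combining the two passes yields a spooky strategy ending in (pebbles $=R$, ghosts $=\emptyset$) with pebbling cost at most $C+m$, as required. For the time bound, the cleanup pass contributes $0$ moves when $S=\emptyset$ (in particular when $|V|=m$) and at most $T$ moves otherwise; in either case this is at most $(T+1)(|V|-m)$, so the total is at most $T+(T+1)(|V|-m)$. The step I expect to be the main obstacle is the cleanup pass: one must fix the invariant $P_t\cup R$ correctly, verify that each translated move is legal under it (the predecessor conditions for \emph{unghost} and \emph{unpebble}) and that skipping root moves preserves it, and argue that every ghost is cleared exactly once. The underlying subtlety is that the cleanup replay begins from a configuration in which the roots are already pebbled, whereas the irreversible strategy assumes an empty start — and it is exactly this mismatch that forces the extra $m$ root pebbles to be carried along.
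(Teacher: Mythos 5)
Your forward pass is correct and is exactly the paper's Lemma~\ref{lemma:irrev->spooky--pebbling_root}: mimic the irreversible strategy, turning unpebbles into ghosts and re-pebbles of ghosted vertices into unghosts, ending with pebbles on $R$ and ghosts on a subset of $V\setminus R$ at cost $C$ and time $T$.

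The cleanup pass, however, has a genuine gap. It rests on the claim that an irreversible \emph{unpebble($v$)} can be replayed as a spooky \emph{unpebble($v$)} ``because its predecessors lie in $P_{t-1}$''. That premise is false: the irreversible unpebble move in Definition~\ref{def:irreversible_pebbling} has no precondition on predecessors, so the given strategy may unpebble $v$ at a moment when some predecessor of $v$ is not in $P_{t-1}$, and the corresponding spooky unpebble (Definition~\ref{def:spooky_pebbling}) is then illegal. Concretely, take the line $a\to b\to c$ with the irreversible strategy pebble($a$), pebble($b$), unpebble($a$), pebble($c$), unpebble($b$): your cleanup replay unghosts $a$ and $b$, unpebbles $a$, skips the root move, and is then stuck with a pebble on $b$ that it can neither unpebble (its predecessor $a$ is gone) nor ghost (that would recreate the very ghost it is clearing). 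The only way out is to re-pebble the predecessors before each such deletion, i.e.\ to replay a projection of the strategy once per ghosted vertex --- which is precisely what the paper's Lemma~\ref{lemma:irrev->spooky--cleaning_graph} does, and is why the theorem's time bound is $T+(T+1)(|V|-m)$ rather than the roughly $2T$ that your single-replay cleanup would yield. Your closing diagnosis also points at the wrong subtlety: carrying the $m$ root pebbles through the replay is unproblematic and only accounts for the $+m$ in the cost; the real obstacle is that irreversible deletions cannot be mimicked one-for-one by spooky deletions.
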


\begin{proof}%
Let $P_0,P_1,\dots,P_T$ be an irreversible pebbling strategy with irreversible pebbling cost $C$. The result follows directly by combining Lemma~\ref{lemma:irrev->spooky--pebbling_root}~and~\ref{lemma:irrev->spooky--cleaning_graph}, using $R'=R$.
\end{proof}

\begin{lem}
\label{lemma:irrev->spooky--pebbling_root}
    Let $P_0,P_1,\dots,P_T$ be an irreversible pebbling strategy for graph $G=(V,E)$ with pebbling time $T$ and irreversible pebbling cost $C$. Then there exists a spooky pebbling sub-strategy from $(P_0,S_0)=(\emptyset,Q)$ to $(P_{T},S_{T})=(R,V\backslash R)$ with pebbling time $T$ and pebbling cost $C$, for any subset $Q\subseteq V$ .
\end{lem}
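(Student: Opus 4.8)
The plan is to build the required spooky sub-strategy by simulating the given irreversible strategy $P_0,\dots,P_T$ move-by-move while forcing the spooky pebble set to coincide with the irreversible pebble set at every time step. Keeping the pebble sets identical makes the pebbling cost and time come out for free; the only design choice is how to realise the irreversible game's ``unconstrained'' unpebble moves, which I would turn into \emph{ghost} moves so that a discarded pebble is recorded rather than deleted, and how to restore a pebble that was previously ghosted, which I would realise with \emph{unghost}.

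Concretely I would set $(P_0',S_0')=(\emptyset,Q)$ and translate the $t$-th irreversible move as follows: if it is $\emph{pebble}(v)$, play $\emph{unghost}(v)$ when $v\in S_{t-1}'$ and $\emph{pebble}(v)$ otherwise; if it is $\emph{unpebble}(v)$, play $\emph{ghost}(v)$. I would then prove by induction on $t$ the invariant $P_t'=P_t$, checking at each step that the chosen spooky move is legal. For the \emph{pebble}/\emph{unghost} branch the precondition ``all predecessors of $v$ lie in $P_{t-1}'=P_{t-1}$'' is exactly the precondition of the irreversible $\emph{pebble}(v)$, and the case split on $v\in S_{t-1}'$ keeps pebbles and ghosts disjoint (one must not $\emph{pebble}$ a ghosted vertex). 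The \emph{ghost} move carries no predecessor condition and applies whenever $v\in P_{t-1}'$, which holds because $v\in P_{t-1}$ for the corresponding irreversible unpebble. The invariant immediately gives $P_T'=P_T=R$, pebbling time exactly $T$ (one spooky move per irreversible move), and pebbling cost $\max_t|P_t'|=\max_t|P_t|=C$.

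The hard part, and the only genuinely nontrivial step, will be verifying the terminal ghost set $S_T'=V\backslash R$. The key auxiliary fact I would establish is that every vertex is pebbled at some time in the irreversible strategy: every vertex is an ancestor of some root, since following out-edges in the finite DAG reaches a sink, i.e.\ a root; all roots are pebbled at time $T$; and ``being pebbled at some time'' propagates backwards along edges because the irreversible $\emph{pebble}$ move requires all predecessors to be pebbled. Combining this with a per-vertex state analysis of the simulation finishes the argument: in the constructed sub-strategy no move ever sends a vertex back to the empty state once it has first been pebbled, so after its first pebbling a vertex only alternates between pebble and ghost. Hence at time $T$ every non-root is non-empty, and since $P_T'=R$ contains no non-root it must be a ghost, while every root lies in $P_T'$ and is therefore not a ghost. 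A non-root that belongs to $Q$ is covered by the same reasoning—it begins as a ghost, is already non-empty, and stays so—so the conclusion $S_T'=V\backslash R$ holds for an arbitrary subset $Q$, as claimed.
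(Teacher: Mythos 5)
Your construction is exactly the paper's: translate each irreversible \emph{pebble} move into a spooky \emph{pebble} or \emph{unghost} move (according to whether the vertex currently carries a ghost) and each irreversible \emph{unpebble} move into a \emph{ghost} move, so that the spooky pebble set coincides with the irreversible one at every step, giving cost $C$ and time $T$ immediately. Your verification that $S_T'=V\backslash R$ --- via the observation that every vertex of the DAG is an ancestor of some root and hence must be pebbled at some point in any irreversible strategy, after which it never returns to the empty state --- is in fact more explicit than the paper's proof, which simply asserts that terminal configuration.
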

\begin{proof}
The irreversible pebbling strategy is followed with the following modifications. If a vertex is pebbled in the irreversible game, then so in the spooky game. If a vertex is unpebbled in the irreversible game, then that pebble is ghosted in the spooky game. If a vertex has a ghost when it is pebbled, the vertex is unghosted instead of pebbled. Thus at the end of this procedure, all roots (outputs) are pebbled and all other vertices are ghosted. This takes $C$ pebbles and $T$ moves, because the irreversible pebbling strategy needs $C$ pebbles and $T$ moves. 
\end{proof}

\begin{lem}
\label{lemma:irrev->spooky--cleaning_graph}
    Let $P_0,P_1,\dots,P_T$ be an irreversible pebbling strategy for graph $G=(V,E)$ with pebbling time $T$ and irreversible pebbling cost $C$. Then for every $R'\subseteq R$ there exists a spooky pebbling sub-strategy from $(P_0,S_0)=(R',V\backslash R')$ to $(P_{K},S_{K})=(R',\emptyset)$ with pebbling time $K=(T+1)(|V|-|R'|)$ and pebbling cost $C+|R'|$.
\end{lem}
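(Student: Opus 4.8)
The plan is to remove the $|V|-|R'|$ ghosts one at a time, in $T+1$ moves per ghost, so that after all rounds we have spent $K=(T+1)(|V|-|R'|)$ moves. The $|R'|$ pebbles that start on $R'$ are kept parked for the whole sub-strategy and are never moved; since every vertex of $R'\subseteq R$ is a root, it is an in-node of nothing, so these parked pebbles neither enable nor block any move on the other vertices, and they contribute exactly $|R'|$ to the pebbling cost. Every remaining move is taken from a run of the given irreversible strategy, translated into spooky moves exactly as in the proof of Lemma~\ref{lemma:irrev->spooky--pebbling_root}; as that translation keeps the pebbled set identical to the irreversible one at every step, it uses at most $C$ pebbles besides the parked ones, keeping the total cost at $C+|R'|$.

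For one round I would fix a reverse topological order $v_1,\dots,v_n$ of $V\setminus R'$, so that all in-nodes of $v_i$ appear later in the list and are still ghosted when $v_i$ is treated, and I would maintain the invariant that after round $i$ the configuration is $(R',\{v_{i+1},\dots,v_n\})$. To clean $v_i$, I would drive the irreversible strategy under the translation pebble$\mapsto$unghost (whenever the vertex is currently ghosted) and unpebble$\mapsto$ghost; this brings a pebble onto $v_i$, first un-ghosting any ghosted ancestors it needs, and eventually re-ghosting each vertex it disturbed, so that all ghosts other than $v_i$ are put back. The single extra move of the round is the cleaning move unpebble$(v_i)$, placed at the moment when $v_i$ carries a pebble and its in-nodes are still pebbled, so that $v_i$ is left empty rather than ghosted. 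One spooky move per irreversible move plus this one cleaning move gives $T+1$ moves for the round.

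The base case $R'=V$ is immediate: there are no ghosts, $K=0$, and $(V,\emptyset)$ is already both the start and the target. Adding the per-round bounds over the $|V|-|R'|$ rounds then yields pebbling time $(T+1)(|V|-|R'|)$ and pebbling cost $C+|R'|$.

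The step I expect to be the main obstacle is realising a round as a removal of exactly one ghost with faithful restoration of all the others, inside the $T+1$-move and $C+|R'|$-pebble budget. Three points need care: the translation of Lemma~\ref{lemma:irrev->spooky--pebbling_root} is stated from the empty pebble configuration, whereas here it must run on top of the parked pebbles on $R'$ and the current ghosts, so one has to recheck that the pebbled set stays synchronised with the irreversible strategy and that the in-node precondition of every unghost, and of the cleaning unpebble$(v_i)$, is met; the roots in $R\setminus R'$, which the irreversible strategy also pebbles, must end up ghosted rather than pebbled so that the final configuration is exactly $(R',\emptyset)$; and the reverse topological order must genuinely guarantee that the in-nodes of $v_i$ are available precisely when the cleaning move is applied, without disturbing the already-cleaned vertices $v_1,\dots,v_{i-1}$. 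These synchronisation and bookkeeping checks are where the real work of the proof lies.
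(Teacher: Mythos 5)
Your proposal is correct and follows essentially the same route as the paper: clean the vertices of $V\setminus R'$ one per round in reverse topological (equivalently, reverse first-pebbling) order, realising each round as the Lemma~\ref{lemma:irrev->spooky--pebbling_root} translation of the irreversible strategy projected to the ancestor subgraph of the target vertex, followed by a single extra unpebble move, with the $|R'|$ root pebbles parked throughout, giving $T+1$ moves and cost $C+|R'|$ per round. Your placement of the cleaning unpebble at a moment when the target's in-nodes are still pebbled is in fact slightly more careful than the paper's phrasing, which nominally performs it only after the sub-strategy of Lemma~\ref{lemma:irrev->spooky--pebbling_root} has already re-ghosted those in-nodes.
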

\begin{proof}

    Let $v$ be a vertex of $G$ and let $G_v = (V_v,E_v) \subseteq G=(V,E)$ be the largest subgraph of $G$ rooted at $v$. We can now unpebble $v$ by projecting the irreversible strategy to $G_v$.
	For this purpose, let $P_i' = P_i \cap V_v$.
	Note, however, that all vertices in $W$ are still ghosted. 
    By Lemma~\ref{lemma:irrev->spooky--pebbling_root} for graph $G'$ with $S_0=Q=V_v$ and $R=\set{v}$, $v$ can be pebbled in pebbling time $T$ and pebbling cost $C$. Hence, unpebbling of $v$ can be done in pebbling time $T+1$ and pebbling cost $C$. 
        We remark that, after applying this spooky sub-strategy, the vertices in $V_v\backslash\{v\}$ are again ghosted and $v$ does not contain a pebble or ghost. 

    The above procedure can be done for every vertex $v$ in $V\backslash R'$. To ensure that unghosting a vertex $v$ does not interfere with other unghostings, unghosting+unpebbling is done in the reverse order as the order we pebbled them (for the first time $t$) in the irreversible pebbling strategy. 

    As we need to remove $|V|-|R'|$ pebbles, the total pebbling time is $(T+1)(|V|-|R'|)$. The pebbling cost is $C+|R'|$, as the $|R'|$ pebbles remain on the roots while $C$ pebbles are used to remove the ghosts.
\end{proof}

Now, we show that the spooky pebble game is \PSPACE-complete. To demonstrate \PSPACE-hardness, we provide a reduction from QBF, a well-known \PSPACE-complete problem~\cite{stockmeyer1973word}. 

First, we restate a theorem which shows a reduction from QBF to the irreversible pebble game for a graph $G$. This irreversible reduction is extended in Theorem~\ref{thm:spooky=QBF} by constructing a graph $G'$ related to $G$ for a reduction from QBF to the spooky pebble game.

In~\cite{gilbert1979pebbling}, for every quantified formula $Q_1x_1Q_2x_2\dots Q_nx_nF$ a graph $G$ is constructed. This graph $G$ has the properties that it is linear sized in terms of the formula length of $Q_1x_1Q_2x_2\dots Q_nx_nF$, that it has more than one edge, and that it contains a single output vertex $q_1$.

\begin{thmC}[{\cite[Theorem 1]{gilbert1979pebbling}}]
\label{thm:irrev=PSPACE}
    The quantified Boolean formula $Q_1x_1Q_2x_2\dots Q_nx_nF$ is true if and only if the graph $G$ with the properties as above has an irreversibly pebbling strategy with pebbling cost $P=3n+4$.
\end{thmC}

\begin{proof}[Proof of Theorem~\ref{thm:irrev=PSPACE}]
    In~\cite{gilbert1979pebbling} another move for the pebble game is allowed called sliding. Therefore, the proof of~\cite{gilbert1979pebbling} needs a little addition which we provide below.
    
    The sliding move that is allowed in~\cite{gilbert1979pebbling} has the following rule:\\
    ``(iii) If all predecessors of an unpebbled vertex $v$ are pebbled, a pebble may be moved from a predecessor of $v$ to $v$."\
    
    By Theorem~A from \cite{emde1978move}, for every DAG $G$ with at least one edge the pebbling cost differs exactly one between allowing and not allowing the sliding move. As $G$ in this theorem has more than one edge and the original theorem in~\cite{gilbert1979pebbling} has $P=3n+3$, Theorem~\ref{thm:irrev=PSPACE} holds for our definition of irreversible pebble games. %
\end{proof}

We will first show the existence of a DAG with desired properties that will be used in the proof of Theorem~\ref{thm:spooky=QBF}.

\begin{lem}
\label{lemma:G-diamond-P--exists}
    For every integer $P$ there exists a DAG $G^\diamond_P$ with the following properties:
    \begin{itemize}
        \item $G^\diamond_P$ has one input (leaf) vertex and one output (root) vertex
        \item $G^\diamond_P$ has size polynomial in $P$
        \item $G^\diamond_P$ can be irreversibly pebbled with $P+1$ pebbles, pebbling the input vertex only once
        \item $G^\diamond_P$ cannot be irreversibly pebbled with $P$ pebbles
        \item Every vertex in $G^\diamond_P$ except the output vertex can be irreversibly pebbled with $P$ pebbles, pebbling the input vertex only once. 
        \item $G^\diamond_P$ can be spooky pebbled with $P+1$ pebbles, and in every such spooky pebbling strategy at some point there are $P+1$ pebbles on non-input nodes and the input node must be pebbled at a later time        
    \end{itemize}
\end{lem}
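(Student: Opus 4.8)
The plan is to realise $G^\diamond_P$ as a single-source, single-sink graph built around the canonical pebbling gadget: the pyramid $\Pi_P$ whose base row has $P$ vertices and whose single apex is the output. The pyramid is exactly the graph whose irreversible pebbling number is $P+1$ while every non-apex vertex sits atop a strictly smaller pyramid and so is pebblable with at most $P$ pebbles. To collapse the $P$ independent base leaves into one input, I would attach a single fresh source $s$ feeding the base (for instance by letting the base form a directed path whose first vertex is fed by $s$, or by a shallow fan-out), choosing the attachment so that the exact pebbling counts of $\Pi_P$ are preserved; the apex becomes the unique root $r$ and $s$ the unique leaf. The whole graph has $\mathcal{O}(P^2)$ vertices, giving the polynomial-size property.

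The four irreversible properties are then routine pyramid bookkeeping. The upper bound $P+1$ together with the ``input pebbled only once'' clause comes from the explicit schedule that pebbles $s$, keeps it (or its immediate successors) available while climbing $\Pi_P$ level by level in the usual pyramid order and reusing pebbles as lower vertices become unnecessary; the classical analysis shows the high-water mark is exactly $P+1$ and that $s$ is never re-pebbled. The matching lower bound (it cannot be done with $P$) is the classical pyramid bottleneck argument, which exhibits a time at which $P+1$ pebbles must lie on a cut of $\Pi_P$. Property five is immediate: any non-output vertex $v$ lies below $r$, so its dependency cone is a pyramid of height at most $P-1$ and is pebblable with at most $P$ pebbles while pebbling $s$ once.

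The spooky property is the crux and splits into an upper bound and a forcing statement. Since spooky pebbling cost is never below the irreversible cost, for the upper bound it suffices to exhibit one spooky strategy meeting $P+1$. I would obtain it from the constructions of Lemmas~\ref{lemma:irrev->spooky--pebbling_root} and~\ref{lemma:irrev->spooky--cleaning_graph}, but with a tailored ghost-clearing phase: rather than paying the generic $C+m=P+2$, one keeps the single root pebble fixed and uses the remaining $P$ pebbles to re-derive, in a reverse-topological clean-up pass, exactly the predecessors needed for the required \emph{unghost} moves. This stays within $P+1$ precisely because of property five (every non-output vertex is pebblable with $P$ pebbles) and ends with $S_T=\emptyset$, as required; the improvement from $P+2$ to $P+1$ is what the special structure buys.

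The forcing direction is where the real work lies. Fix any spooky strategy using $P+1$ pebbles. By the pyramid bottleneck there is a time $\tau$ at which all $P+1$ pebbles occupy a cut of $\Pi_P$; since that cut consists of non-input vertices and the budget is exactly $P+1$, the input $s$ carries no pebble at $\tau$, which is the ``$P+1$ pebbles on non-input nodes'' clause. To reach this configuration within budget one must free the lower levels of the pyramid while their ancestors are still needed, and here the spooky rule bites: an \emph{unpebble} of a vertex requires its predecessors to be pebbled, so a vertex cannot be legally unpebbled once the material below it has been cleared, and the only remaining way to free a pebble is to \emph{ghost}. Hence the approach to $\tau$ necessarily creates a ghost on some input-dependent vertex, and since a spooky strategy must finish with $S_T=\emptyset$, that ghost must later be removed by \emph{unghost}, whose precondition forces its predecessors, and tracing the dependency chain down to the base the input $s$, to carry a pebble at some time strictly after $\tau$. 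The main obstacle is making this last step airtight: one must show, by a careful accounting of when $s$ and the base vertices may legally be freed under the spooky rules, that with exactly $P+1$ pebbles at least one input-dependent vertex is genuinely \emph{ghosted} (rather than merely unpebbled while $s$ is still present), so that the clean-up phase is unavoidable. I expect this to require tailoring the attachment of $s$ to $\Pi_P$ so that the relevant base vertices cannot all be unpebbled before the bottleneck, which is the one place where the precise choice of construction matters.
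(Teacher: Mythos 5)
Your construction differs from the paper's: you build a single pyramid with a source attached to its base, whereas the paper uses a $P\times P$ grid (two pyramids glued along their common base, one expanding from the unique input $(1,1)$ to the middle anti-diagonal and one contracting from there to the unique output $(P,P)$), which gives single input and single output for free and places the pebbling bottleneck at the middle line with an entire half of the graph still to traverse afterwards. Your route is not unreasonable, but it has a genuine gap at exactly the point you flag yourself. First, the construction is underspecified: you offer two incompatible attachments of $s$ (a directed path through the base, or a fan-out) and simply assert that the attachment can be chosen ``so that the exact pebbling counts of $\Pi_P$ are preserved.'' That preservation is not automatic --- making the base a path adds edges inside the pyramid and changes which vertices are predecessors of which, and the fan-out forces $P+1$ simultaneous pebbles ($s$ plus all base vertices) before the pyramid strategy even starts if the input may be pebbled only once --- so it must be verified, not assumed, and the verification depends on which attachment you pick.

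Second, and more importantly, the final bullet is the property the whole \PSPACE-hardness reduction rests on, and your treatment of it ends with ``I expect this to require tailoring the attachment of $s$ to $\Pi_P$.'' That is the missing proof, not a proof. The paper closes this gap with a concrete combinatorial argument on the grid: look at the last time a middle-line vertex $(i,P-i+1)$ is pebbled before the output; there must then be a pebble-free path $r$ from it to $q'$, and every one of the $P-1$ left- and right-diagonal paths must be blocked by a pebble \emph{below} $r$ (else it could be spliced with a suffix of $r$ into an unblocked path to $q'$). Counting these $P-1$ blockers, the pebble on $(i,P-i+1)$, and its pebbled predecessors gives $P+1$ pebbles, all on non-input nodes; and since no spare pebble remains below the middle line, the predecessors of $(i,P-i+1)$ can neither be legally unpebbled nor have their ghosts cleared without descending all the way back to $q$. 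Your sketch gestures at the same dichotomy (unpebble needs predecessors, ghost needs a later unghost) but never establishes the quantitative fact that drives it --- that at the bottleneck \emph{no} pebble can remain on or below the input side --- and you explicitly concede that ruling out a strategy which keeps $s$ pebbled long enough to cleanly unpebble the base is open. Until that accounting is done for a fixed, fully specified graph, the forcing clause, and hence the reduction in Theorem~\ref{thm:spooky=QBF}, is not proved.
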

\begin{proof}
    Let $G^\diamond_P$ be the following DAG:
    
    \begin{figure}[tbh]
    \centering
        \begin{tikzpicture}[main/.style = {draw, circle,minimum size=1.6cm}, scale=1.35] 
\hspace*{2pt}

\node[main] (q) at (0,0) {\tiny$q=1,1$}; 

\node[main] (11) at (-1,1) {\tiny1,2}; 
\node[main] (12) at (1,1) {\tiny2,1}; 

\node[main] (21) at (-2,2) {\tiny1,3}; 
\node[main] (22) at (0,2) {\tiny2,2}; 
\node[main] (23) at (2,2) {\tiny3,1}; 

\node[main] (31) at (-4,4) {\tiny1,$P$-1}; 
\node[main] (32) at (-2,4) {\tiny2,$P$-2}; 
\node[main] (34) at (4,4) {\tiny$P$-1,1}; 

\node[main] (41) at (-5,5) {\tiny1,$P$}; 
\node[main] (42) at (-3,5) {\tiny2,$P$-1}; 
\node[main] (44) at (5,5) {\tiny$P$,1}; 

\node[main] (51) at (-4,6) {\tiny2,$P$}; 
\node[main] (52) at (-2,6) {\tiny3,$P$-1}; 
\node[main] (54) at (4,6) {\tiny$P$,2};

\node[main] (61) at (-2,8) {\tiny$P$-2,$P$}; 
\node[main] (62) at (0,8) {\tiny$P$-1,$P$-1}; 
\node[main] (63) at (2,8) {\tiny$P$,$P$-2}; 

\node[main] (71) at (-1,9) {\tiny$P$-1,$P$}; 
\node[main] (72) at (1,9) {\tiny$P$,$P$-1}; 

\node[main] (q') at (0,10) {\tiny$q'=P,P$};

\draw[->] (q) -- (11);
\draw[->] (q) -- (12);
\draw[->] (11) -- (21);
\draw[->] (11) -- (22);
\draw[->] (12) -- (22);
\draw[->] (12) -- (23);

\draw[dashed] (21) -- (31);
\draw[dashed] (23) -- (34);

\draw[dotted] (32) -- (34);
\draw[dotted] (42) -- (44);
\draw[dotted] (52) -- (54);

\draw[->] (31) -- (41);
\draw[->] (31) -- (42);
\draw[->] (32) -- (42);
\draw[->] (34) -- (44);

\draw[->] (41) -- (51);
\draw[->] (42) -- (51);
\draw[->] (42) -- (52);
\draw[->] (44) -- (54);

\draw[dashed] (51) -- (61);
\draw[dashed] (54) -- (63);

\draw[->] (71) -- (q');
\draw[->] (72) -- (q');
\draw[->] (61) -- (71);
\draw[->] (62) -- (71);
\draw[->] (62) -- (72);
\draw[->] (63) -- (72);

\end{tikzpicture} 
\vspace{-2em}
    \end{figure}

    Now we will verify that this DAG satisfies the properties from this lemma.

    It is easy to check that $G^\diamond_P$ has $P^2$ nodes, so its size is polynomial in $P$.

    Moreover, $G^\diamond_P$ has a single input and output node.

    $G^\diamond_P$ can be irreversibly pebbled with $P+1$ pebbles, pebbling the input $(1,1)$ only once, using the following strategy:

        \begin{algorithmic}[1]
            \For{$i$ in 1 to $P$}
            \State pebble($i$,1)
            \EndFor
            \For{$j$ in 2 to $P$}
            \For{$i$ in 1 to $P$}
            \State pebble($i$,$j$)
            \State unpebble($i$,$j-1$)
            \EndFor
            \EndFor
            \For{$i$ in 1 to $P-1$}
            \State unpebble($i$,$P$)
            \EndFor
        \end{algorithmic}

    Every vertex in $G^\diamond_P\backslash\{q'\}$ can be pebbled with $P$ pebbles using the above algorithm with $i$ always in 1 to $P-1$ instead of in 1 to $P$. The nodes $(P,k)$ with $k\in\{1,\dots,P-1\}$ cannot be pebbled using this algorithm, but, by symmetry of the graph, we can run the symmetric algorithm with swapped coordinates (i.e. replace pebble$(x,y)$ and unpebble$(x,y)$ with pebble$(y,x)$ and unpebble$(y,x)$) to pebble them.

    Consider an irreversible or spooky pebbling strategy to pebble $G^\diamond_P$. Consider the last time that a node on the middle line $(1,P)$,$(2,P-1)$,\dots,$(P,1)$ is pebbled before the output $q'$ is pebbled. Say that $(i,P-i+1)$ is pebbled. Then without loss of generality there exists a pebble-free path $r$ in $G^\diamond_P$ from $(i,P-i+1)$ to $q'$, otherwise, the pebbling of $(i,P-i+1)$ could have been omitted. Moreover, after $(i,P-i+1)$ is pebbled, there should be no path without pebbles from the middle line to $q'$. We have that every left diagonal path \\$(P,P)-(P-1,P)-\dots-(1,P)$; \\$(P,P-1)-(P-1,P-1)-\dots-(1,P-1)$; \\$\dots$; \\$(P,P-i)-(P-1,P-i)-\dots-(1,P-i)$\\ and every right diagonal path \\$(i+1,P)-(i+1,P-1)-\dots-(i+1,1)$; \\$\dots$; \\$(P,P)-(P,P-1)-\dots-(P,1)$ \\must be blocked by a pebble. This should happen below the path $r$, as a top part of the path $r$ can be combined with a diagonal path into a new path, which should be blocked. Therefore, at least $P-1$ pebbles are needed to block all these $P-1$ diagonals. As also 1 pebble is needed to pebble $(i,P-i+1)$, and the inputs to $(i,P-i+1)$ should be pebbled, at least $P+1$ pebbles are needed to pebble $G^\diamond_P$.\\
    Moreover, if $G^\diamond_P$ is spooky pebbled with $P+1$ pebbles, all $P+1$ pebbles are on non-input nodes. Moreover, there are no other pebbles left below the middle line, so unpebbling the input of $(i,P-i+1)$ is not possible without pebbling $q$ at a later time. When ghosting the input of $(i,P-i+1)$, this ghost should be removed before ending the game, which also requires to pebble $q$ at a later time. So the input of $G^\diamond_P$ must be pebbled at a later time to finish the game.
\end{proof}

Using the above theorem and lemma, we can prove the reduction from QBF to the spooky pebble game. This is formally stated in the following theorem. 

\begin{thm}
\label{thm:spooky=QBF}
For every Quantified Boolean Formula $F$ there exists a DAG $G'$ with one output and a number of pebbles $P$ such that $F$ is true if and only if $G'$ can be spooky pebbled with $P$ pebbles.
\end{thm}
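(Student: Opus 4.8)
The plan is to reduce from QBF by composing the irreversible reduction of Theorem~\ref{thm:irrev=PSPACE} with the obstruction gadget of Lemma~\ref{lemma:G-diamond-P--exists}. Given a formula $F=Q_1x_1\cdots Q_nx_nF_0$, let $G$ be the graph of Theorem~\ref{thm:irrev=PSPACE}: it has a single output $q_1$, it is linear-sized, and it admits an irreversible pebbling of cost $P = 3n+4$ exactly when $F$ is true, while it needs at least $P+1$ pebbles otherwise. I would build $G'$ by taking $G$ together with a fresh copy of $G^\diamond_P$ and identifying the unique input leaf of $G^\diamond_P$ with the output $q_1$ of $G$; the unique output of $G'$ is the output $q'$ of the gadget, and the number of pebbles is set to $P+1$. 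Since $G$ is linear-sized and $G^\diamond_P$ has size polynomial in $P$, the graph $G'$ has polynomial size, so this is a genuine (polynomial-time) reduction.

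For the forward direction, assume $F$ is true, so $G$ has an irreversible strategy of cost $P$. I would exhibit a spooky strategy for $G'$ of cost $P+1$ by interleaving three ingredients. First, applying Lemma~\ref{lemma:irrev->spooky--pebbling_root} to $G$ (with $Q=\emptyset$ and root $\{q_1\}$), pebble $q_1$ while ghosting every internal vertex of $G$, at cost $P$. Second, run the gadget's own minimal $(P+1)$-pebble spooky strategy on $G^\diamond_P$ with $q_1$ as its input; at its bottleneck $q_1$ is freed and all $P+1$ pebbles sit on the gadget. Third, when that strategy must re-pebble its input, first let the gadget release pebbles down to a single retained pebble, then re-pebble $q_1$ out of the ghosted copy of $G$ by re-applying Lemma~\ref{lemma:irrev->spooky--pebbling_root} from the all-ghosted state (cost $P$); together with the one pebble retained on the gadget this stays within the budget $P+1$. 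Finishing the gadget and then removing all remaining ghosts with Lemma~\ref{lemma:irrev->spooky--cleaning_graph} keeps the peak at $P+1$.

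For the converse, suppose $G'$ has a spooky strategy of cost at most $P+1$. Pebbling the output $q'$ forces pebbling the gadget, which by Lemma~\ref{lemma:G-diamond-P--exists} needs at least $P+1$ pebbles; hence the peak is exactly $P+1$ and the gadget's forcing property applies. Consider the last bottleneck before $q'$ is finally pebbled: at that moment all $P+1$ pebbles lie on non-input gadget vertices, so the copy of $G$ carries no pebbles, and afterwards $q_1$ must be pebbled again. I would isolate the sub-strategy between this bottleneck and the next time $q_1$ is pebbled and project it onto the vertices of $G$. Converting this spooky sub-strategy into an irreversible one move-by-move---sending each \emph{ghost} to an \emph{unpebble} and each \emph{unghost} to a \emph{pebble}, leaving \emph{pebble}/\emph{unpebble} unchanged---yields a valid irreversible strategy on $G$ that pebbles $q_1$, since every predecessor condition demanded by a spooky move is exactly the predecessor condition demanded by its irreversible image, and the set of pebbled vertices is preserved at every step. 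Thus the irreversible peak equals the spooky $G$-peak over this phase, which is at most $P$, giving an irreversible pebbling of $G$ with $P$ pebbles and hence, by Theorem~\ref{thm:irrev=PSPACE}, that $F$ is true.

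The main obstacle is the quantitative claim closing the converse: that throughout the isolated phase the gadget retains at least one pebble, so the copy of $G$ never carries more than $P$ pebbles (rather than the trivial $P+1$). This is exactly where the design of $G^\diamond_P$ is essential: re-pebbling $q_1$ is only ever needed to clean the gadget vertices strictly below the middle line, and cleaning them keeps the path down to $q_1$ alive, so the gadget cannot drop to zero pebbles during this phase without destroying the bottleneck it has just built---which, by the lower-bound argument inside Lemma~\ref{lemma:G-diamond-P--exists}, would force yet another full $(P+1)$-pebble bottleneck and contradict the choice of the \emph{last} one. Making this ``the gadget stays nonempty until $q_1$ is restored'' argument fully rigorous, together with verifying that the forward interleaving indeed never exceeds $P+1$, are the two delicate points; everything else reduces to the bookkeeping already packaged in Lemmas~\ref{lemma:irrev->spooky--pebbling_root} and~\ref{lemma:irrev->spooky--cleaning_graph}.
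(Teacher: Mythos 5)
Your construction of $G'$ (stack $G^\diamond_P$ on top of $G$, identify the gadget's input with $q$, budget $P+1$) and your treatment of the converse direction match the paper: the paper also locates the gadget's forced bottleneck, argues by minimality that at least one pebble must remain on $G^\diamond_P\setminus\{q\}$ until $q$ is re-pebbled (``otherwise those pebblings could be omitted''), and concludes that $G$ is spooky-pebbled with $P$ pebbles, which converts move-by-move into an irreversible $P$-pebbling exactly as you describe. Your explicit spooky$\to$irreversible translation is in fact more detailed than the paper's, which leaves it implicit.

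The forward direction is where you diverge, and where your proposal has a genuine gap. You propose to run the gadget's minimal $(P+1)$-pebble spooky strategy and, at the moment it must re-pebble its input, ``let the gadget release pebbles down to a single retained pebble'' so that $q$ can be recomputed from the ghosted copy of $G$ with $P$ pebbles. This step is not justified: at that moment the gadget holds $P+1$ pebbles on non-input vertices, and the only way to shed $P$ of them without their predecessors being pebbled is to ghost them; those ghosts must later be unghosted, which requires re-pebbling their predecessors inside the gadget, which can cascade back to requiring the input again --- you have not shown this terminates within budget. The paper avoids the interleaving entirely with a cleaner decomposition: first irreversibly pebble all of $G'$ with $P+1$ pebbles and convert via Lemma~\ref{lemma:irrev->spooky--pebbling_root} into a spooky sub-strategy ending with a pebble on $q'$ and ghosts everywhere else; then remove the ghosts one vertex at a time via Lemma~\ref{lemma:irrev->spooky--cleaning_graph} applied to each single-rooted subgraph $G'_v$, $v\neq q'$. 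The budget works out because of the property of $G^\diamond_P$ you never invoke, namely that \emph{every vertex except the output} can be irreversibly pebbled with only $P$ pebbles (pebbling the input once); hence each cleaning pass costs $P$ plus the one pebble parked on $q'$, i.e.\ $P+1$. Note also that feeding Lemma~\ref{lemma:irrev->spooky--cleaning_graph} the full $(P+1)$-cost irreversible strategy for $G'$, as your last sentence of the forward direction suggests, would give $P+2$; the per-subgraph application with cost $P$ is essential.
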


\begin{proof}
Let $F$ be a Quantified Boolean Formula. Take from Theorem~\ref{thm:irrev=PSPACE} DAG $G$ and number of pebbles $P$ corresponding to $F$. Thus $G$ can be irreversibly pebbled with $P$ pebbles if $F$ is true, and with more than $P$ pebbles if $F$ is false. Let $G^\diamond_P$ be the DAG from Lemma~\ref{lemma:G-diamond-P--exists}.

Now, we can construct DAG $G'$ from $G$ and $G^\diamond_P$. We will show that $G'$ can be spooky pebbled with $P+1$ pebbles if $F$ is true, and with strictly more than $P+1$ pebbles if $F$ is false. Recall that $G$ has one output node, which we will name $q$. Let the input node of $G^\diamond_P$ be $q$ and let $q'$ be the output node of $G^\diamond_P$. See Figure~\ref{fig:G-prime} for the construction of $G'$.

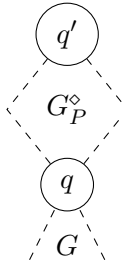
\begin{figure}[tbh]
    \centering
    \begin{tikzpicture}[main/.style = {draw, circle, minimum size = 7mm}] 

\node[] (G) at (0,-.8) {$G$}; 

\node[main] (q) at (0,0) {$q$}; 

\node[] (Gdiam) at (0,1) {$G^\diamond_P$};

\node[main] (q') at (0,2) {$q'$};

\draw[dashed] (0.5,-1) -- (q);
\draw[dashed] (-0.5,-1) -- (q);

\draw[dashed] (q) -- (.8,1) -- (q');
\draw[dashed] (q) -- (-.8,1) -- (q');

\end{tikzpicture} 
    \caption{DAG $G'$ which can be spooky pebbled with $P+1$ pebbles if and only if $G$ can be pebbled with $P$ pebbles. See the proof in Theorem~\ref{thm:spooky=QBF}.}
    \label{fig:G-prime}
\end{figure}

We will show that $G'$ can be spooky pebbled with $P+1$ pebbles if and only if $G$ can be irreversibly pebbled with $P$ pebbles. This shows the theorem.

($\impliedby$) Assume that $G$ can be irreversibly pebbled with $P$ pebbles. So $q$ can be irreversibly pebbled with $P$ pebbles. As $G^\diamond$ can be irreversibly pebbled with $P+1$ pebbles pebbling $q$ once, we can irreversibly pebble $G'$ using $P+1$ pebbles. Similarly, every $G'\backslash\{q'\}$ can be irreversibly pebbled using $P$ pebbles. By Lemma~\ref{lemma:irrev->spooky--pebbling_root}, we can pebble $q'$ using $P+1$ pebbles, leaving ghosts at $G'\backslash\{q'\}$. As every node in $G'\backslash\{q'\}$ can be irreversibly pebbled using $P$ pebbles, every largest subgraph $G'_v$ of $G'$ with single root $v\in G'\backslash\{q'\}$ can be irreversibly pebbled with $P$ pebbles. Applying Lemma~\ref{lemma:irrev->spooky--cleaning_graph} to these graphs shows that any ghost from $G'\backslash\{q'\}$ can be removed with $P$ pebbles. \\
Hence, $G'$ can be spooky pebbled with $P+1$ pebbles.

($\implies$) Assume $G'$ can be spooky pebbled with $P+1$ pebbles. Recall that we assumed that $G^\diamond_P$ can be spooky pebbled with $P+1$ pebbles, and that in every such spooky pebbling strategy at some point there are $P+1$ pebbles on non-input nodes and the input node must be pebbled at a later time. Hence, when $G'$ is spooky pebbled, at this point $q$ must be pebbled at a later time when at least one pebble is on $G^\diamond_P\backslash\{q\}$: if all pebbles from $G^\diamond_P\backslash\{q\}$ could be removed, then the pebblings on $G^\diamond_P\backslash\{q\}$ could be omitted from the strategy. As $q$ is the output of $G$, $G$ can be spooky pebbled with $P$ pebbles.
\end{proof}

\begin{note}
\label{obs:spooky-pebble-game_in_PSPACE}
    
    It is easy to see that solving the spooky pebble game is in \PSPACE{} using the following argument (which is similar to the one in~\cite{gilbert1979pebbling} for reversible pebbling): checking whether a spooky pebbling strategy using $P$ pebbles is valid can be done in polynomial space, thus solving the spooky pebble game is in \NPSPACE. By Savitch's theorem~\cite{savitch1970relationships}, we know that $\NPSPACE=\PSPACE$. So solving the spooky pebble game is in \PSPACE.
\end{note}

\begin{proof}[Proof of Theorem~\ref{thm:spooky-pebble-game=PSPACE-complete}]
    Using the fact that QBF is \PSPACE-complete~\cite{stockmeyer1973word}, Theorem~\ref{thm:spooky=QBF} shows that the spooky pebble game is \PSPACE-hard. Note~\ref{obs:spooky-pebble-game_in_PSPACE} shows that solving the spooky pebble game is in \PSPACE. Hence, deciding whether there exists a spooky pebbling strategy for a graph $G$ with $P$ pebbles is \PSPACE-complete.
\end{proof}

\FloatBarrier
\section{Spooky pebble game solver}
\label{sect:pebblegameSolver}

As shown in the previous section, solving the spooky pebble game is a hard problem.
In this section we present a solver algorithm for the spooky pebble game that searches for strategies with a minimal cost.  To this end, in Section~\ref{sect:SAT_solver}, we encode the spooky pebble game as a satisfiability problem, similar to the method of \cite{meuli2019reversible} for the reversible pebble game.
The output of the satisfiability solver seems usually not to be an optimal pebbling strategy in number of pebbles and ghosts used or in pebbling time. Therefore, we introduce efficient and short-running heuristics in Section~\ref{sect:Optimizers}, to further optimize the pebbling strategy by reducing time, the number of pebbles and the number of ghosts.%

\subsection{SAT solver}
\label{sect:SAT_solver}

To encode the spooky pebble game into SAT, we use the Boolean variables $p_{v,i}$ and $s_{v,i}$ with $v$ the vertex and $i$ the time. The variable $p_{v,i}$ indicates whether vertex $v$ is pebbled at time $i$, and similar for ghosting with variable $s_{v,i}$. We use the shorthand $x_i = \bigcup_{v\in V} p_{v,i} \cup \bigcup_{v\in V} s_{v,i}$ to describe all variables at time $i$.

We use the clauses as stated below to describe the game. If there is an assignment of variables that satisfies all clauses, i.e. evaluates to true, this assignment describes a solution of the game. The clauses below are for the game with $T$ timesteps.\\
\textbf{Initial clauses:}
For $V$ the set of vertices in DAG $G$ we have initial clauses:
\begin{equation}
    \label{eq:init-clauses}
    I(x_0) \defn \bigwedge_{v\in V} \neg p_{v,0} \wedge \neg s_{v,0}
\end{equation}
\textbf{Final clauses:}
For $R$ the set of outputs (roots) in DAG $G$ we have final clauses:
\begin{equation}
    \label{eq:final-clauses}
    F(x_T) = \bigwedge_{v\in R} p_{v,T} \wedge \bigwedge_{v\not\in R} \neg p_{v,T} \wedge \bigwedge_{v\in V} \neg s_{v,T}
\end{equation}
\textbf{Move clauses:}
There are the following move clauses for edgeset $E$ of DAG $G$, for every time $i=0,\dots, T-1$:\\
\textit{To add a pebble }all predecessors must be pebbled and a ghost on this vertex should be removed:\footnote{Hereby we prevent a vertex to be pebbled and ghosted at the same time.}
\begin{multline}
    \label{eq:first-move-clause}
    M_1(x_i,x_{i+1})=
    \bigwedge_{(v,w)\in E} \left[\left(\neg p_{v,i} \wedge p_{v,i+1}\right)\Longrightarrow\left(p_{w,i}\wedge p_{w,i+1}\right)\right] \wedge \\
    \bigwedge_{v\in V}\left[\left(\neg p_{v,i} \wedge p_{v,i+1}\right)\Longrightarrow \neg s_{v,i+1}\right]
\end{multline}
\textit{To remove a pebble }all predecessors must be pebbled or the pebble must be changed into a ghost:
\begin{equation}
    \label{eq:second-move-clause}
    M_2(x_i,x_{i+1})=\bigwedge_{(v,w)\in E} \left[\left(p_{v,i} \wedge \neg p_{v,i+1}\right)\Longrightarrow\left((p_{w,i} \wedge p_{w,i+1}) \vee s_{v,i+1}\right)\right]
\end{equation}
\textit{To add a ghost }the vertex must be unpebbled:
\begin{equation}
    \label{eq:third-move-clause}
    M_3(x_i,x_{i+1})=\bigwedge_{v\in V} \left[\left(\neg s_{v,i} \wedge s_{v,i+1}\right)\Longrightarrow \left(p_{v,i} \wedge \neg p_{v,i+1}\right)\right]
\end{equation}
\textit{To remove a ghost }all predecessors must be pebbled and the ghost must be changed into a pebble:
\begin{equation}
    \label{eq:last-move-clause}
    M_4(x_i,x_{i+1})=\bigwedge_{(v,w)\in E} \left[\left(s_{v,i} \wedge \neg s_{v,i+1}\right)\Longrightarrow\left((p_{w,i} \wedge p_{w,i+1}) \wedge p_{v,i+1}\right)\right]
\end{equation}
\textbf{Cardinality clauses:}
For $P\in \mathds{N}$ the maximal number of pebbles and $S\in \mathds{N}$ the maximal number of ghosts we have for every $i=0,1,\dots,T$ the clauses:
\begin{equation}
    \label{eq:cardinality-clauses}
     C_{P,S}(x_i) = \left(\sum_{v\in V}p_{v,i}\leq P\right) \wedge \left(\sum_{v\in V}s_{v,i}\leq S\right)
\end{equation}

Using these clauses, we actually encode our problem as in Bounded Model Checking format (BMC)~\cite[Chapter 18]{2021Hos}. Using this formula the problem can be easily defined for arbitrary times $T$. The following BMC formula is used:
\begin{multline}
    I(x_0) \wedge C_{P,S}(x_0) \wedge M(x_0,x_1) \wedge C_{P,S}(x_1) \wedge  \dots \wedge M(x_{T-1},x_T) \wedge C_{P,S}(x_T) \wedge F(x_T).
\end{multline}
Here, $M$ is the conjunction $M = M_1\wedge M_2 \wedge M_3 \wedge M_4$ of the move clauses from Equations~\eqref{eq:first-move-clause}~to~\eqref{eq:last-move-clause}. With this BMC formula, the time $T$ can be gradually unrolled until a solution is found. This formula can be given to a SAT solver to find a solution for the spooky pebble game problem.

\begin{algorithm}[tbh]
\caption{Spooky pebble game solver using SAT (runtime timeout $t_{max}$).}
\label{alg:spooky_solver_heuristic}
\begin{algorithmic}[1]
\State $T,T_{prev} := 0,0$
\State {formula = $I(x_0)\wedge C_{P,S}(x_0)\wedge F(x_0)$}\Comment{encode pebble game for $T=0$}
\While{result $\not=$ SAT}
    \State {formula := formula.pop($F_{F_{x_{prev}}}$)}
    \State {formula := formula.push($M(x_{T_{prev}},x_{T_{prev}+1})\wedge C_{P,S}(x_{T_{prev}+1})\wedge \dots \wedge M(x_{T-1},x_{T})\wedge C_{P,S}(x_{T})\wedge F(x_T)$)}\Comment{unroll formula up to time $T$}
    \State $T_{prev} := T$
    \State result := SAT(formula) \Comment{SAT solver returns (un)sat, or timeout after $t_{wait}$ seconds}
    \If {result $=$ timeout}
        \State $T := T + T^{(skip)}$
    \EndIf
    \If {result $=$ UNSAT}
        \State $T := T + 1$
    \EndIf
\EndWhile
\end{algorithmic}
\end{algorithm}

In Algorithm~\ref{alg:spooky_solver_heuristic} we state our heuristic for the spooky pebble game solver. The solver starts with time $T=0$. If the formula is proven to be \emph{unsatisfiable} (i.e., the graph cannot be pebbled with $P$ pebbles and $S$ ghosts in $T$ moves), $T$ is incremented by 1. If the SAT solver cannot find a solution in runtime $t_{wait}$ but the formula cannot be proven to be unsatisfiable, it returns a \emph{timeout} and $T$ is incremented with $T^{(skip)}$. For a larger time $T$ the solver may find a solution faster, but this solution might not have optimal depth.\footnote{We define $T^{(skip)}$ (which is possibly greater than 1) because every call to the SAT solver takes some 'useless' time to optimize the SAT formula. Thus the pebble game solver usually can find a spooky pebble game solution faster when $T^{(skip)}>1$.} \ %
If the SAT solver finds a \emph{satisfiable} solution, the program is stopped as a strategy for the spooky pebble game is found. This entire solving procedure has a timeout of $t_{max}$.

Using push and pop statements for the final clauses, incremental SAT solving can be used, which updates the SAT formula for every new time $T$. This incremental extension of the formula speeds up the SAT solver in subsequent runs.%

We remark that the above encoding implements parallel pebbling semantics so multiple pebbles and ghosts can be added or removed at the same timestep. For example, in the pebble configuration at the top of Figure~\ref{fig:examples_of_pebbling_steps}, the steps \emph{unpebble(B)}, \emph{unpebble(C)} and \emph{pebble(F)} can be applied in one parallel timestep, see Figure~\ref{fig:parallel_semantics}. In general, the parallel pebbling time for a game is much smaller than the sequential pebbling time as multiple sequential steps can be applied in one parallel step. This highly reduces the number of BMC unrollings to find a solution. As every BMC unrolling duplicates the move and cardinality clauses as well as the variables all the time, the size of the SAT formula can be highly reduced too. This reduction of clauses and variables in the SAT formula speeds up the SAT solver. A disadvantage of this method is that the solver in general does not provide a solution with optimal sequential time.
\begin{figure}[tbh]
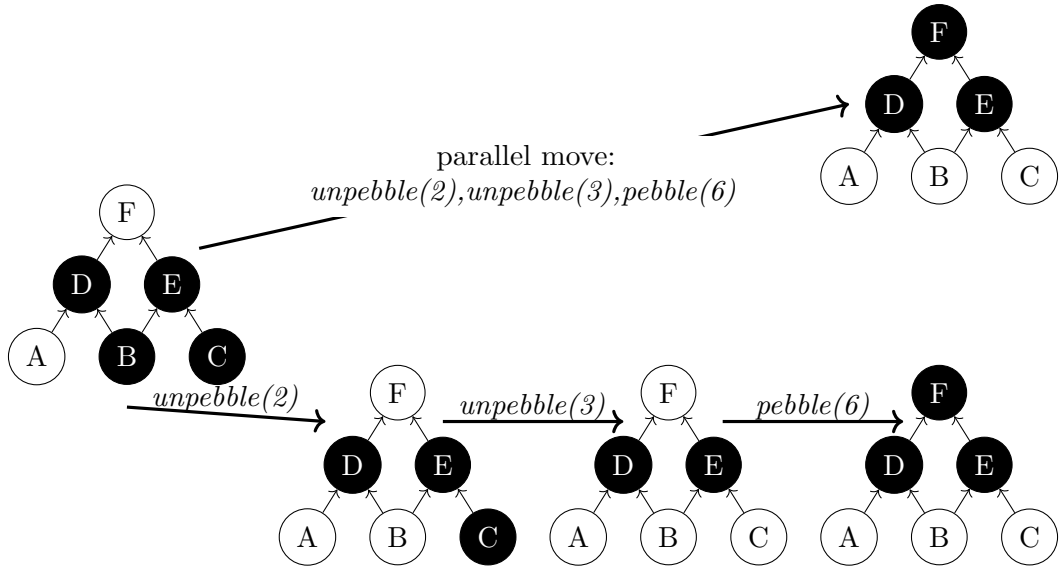

    \centering
    \begin{tikzpicture}[scale=1.2]
        \newcommand{\hdist}{1}    %
        \newcommand{\vdist}{0.8}    %
    
        \newcommand{\1}{0}  %
        \newcommand{\2}{1}
        \newcommand{\3}{1}
        \newcommand{\4}{1}
        \newcommand{\5}{1}
        \newcommand{\6}{0}
        \newcommand{\hshift}{0}
        \newcommand{\vshift}{0}
        \input{DAG_example.tex}

        \renewcommand{\1}{0}  %
        \renewcommand{\2}{0}
        \renewcommand{\3}{0}
        \renewcommand{\4}{1}
        \renewcommand{\5}{1}
        \renewcommand{\6}{1}
        \renewcommand{\hshift}{9}
        \renewcommand{\vshift}{2}
        \input{DAG_example.tex}

        \renewcommand{\1}{0}  %
        \renewcommand{\2}{0}
        \renewcommand{\3}{1}
        \renewcommand{\4}{1}
        \renewcommand{\5}{1}
        \renewcommand{\6}{0}
        \renewcommand{\hshift}{3}
        \renewcommand{\vshift}{-2.}
        \input{DAG_example.tex}
        
        \renewcommand{\1}{0}  %
        \renewcommand{\2}{0}
        \renewcommand{\3}{0}
        \renewcommand{\4}{1}
        \renewcommand{\5}{1}
        \renewcommand{\6}{0}
        \renewcommand{\hshift}{6}
        \renewcommand{\vshift}{-2.}
        \input{DAG_example.tex}

        \renewcommand{\1}{0}  %
        \renewcommand{\2}{0}
        \renewcommand{\3}{0}
        \renewcommand{\4}{1}
        \renewcommand{\5}{1}
        \renewcommand{\6}{1}
        \renewcommand{\hshift}{9}
        \renewcommand{\vshift}{-2}
        \input{DAG_example.tex}

        \draw [->,very thick] (1*\hdist,-.7*\vdist) -- node[text width = 2 cm, align = center, text depth = .5 cm] {\emph{unpebble(2)}} (3.2*\hdist,-.9*\vdist);
        \draw [->,very thick] (4.5*\hdist,-.9*\vdist) -- node[text width = 2 cm, align = center, text depth = .5 cm] {\emph{unpebble(3)}} (6.5*\hdist,-.9*\vdist);
        \draw [->,very thick] (7.6*\hdist,-.9*\vdist) -- node[text width = 2 cm, align = center, text depth = .5 cm] {\emph{pebble(6)}} (9.6*\hdist,-.9*\vdist);
        
        \draw [->,very thick] (1.8*\hdist,1.5*\vdist) -- node[text width = 6 cm, align = center, fill=white] {parallel move:\\ \emph{unpebble(2),unpebble(3),pebble(6)}} (9.*\hdist,3.5*\vdist);

    \end{tikzpicture}
    \caption{Example of a parallel move which consists of 3 sequential moves. Allowing parallel moves reduces the number of moves and hence reduces the size of the SAT encoding of a pebble game.}
    \label{fig:parallel_semantics}
\end{figure}

By linearizing solutions with parallel pebbling semantics, we can easily obtain a sequential solution, in which one pebble or ghost is added or removed at each time step.  These sequential times (which we usually refer to as \emph{pebbling time}, Section~\ref{sect:theory}) are usually reported for pebbling strategies, see e.g.~\cite{meuli2019reversible}, and these sequential pebbling strategies are used by our optimization heuristics in the subsequent section. %

\subsection{Heuristics for Optimization}
\label{sect:Optimizers}

The solutions to the spooky pebble game as found by SAT solving described in the previous section are usually not optimal. Often the number of pebbles, ghosts or time steps can be reduced. In this section we will describe how we optimize these solutions using a combination of heuristic methods. These heuristics reduce the number of pebbles, the number of ghosts and/or the depth of the solution.

The suboptimallity of the spooky pebble game solutions the SAT solver outputs arises from our SAT encoding, because we do not constrain the SAT solver to optimize the number of moves, the number of pebbles or the number of ghosts in a solution. Moreover, with our encoding of the spooky pebble game, the SAT solver searches only for semantically parallel solutions as this highly decreases the SAT solver runtime, but this can give inefficient sequential solutions.%

We tried to find solutions with optimal number of pebbles and ghosts by using exact methods, for example MaxSAT.\footnote{More specific, we tried the method of~\cite{narodytska2014maximum} as implemented in Z3 by~\cite{bjornerprogramming}.} These methods turned out to be very slow: they did not provide results on inputs that the SAT solver could easily deal with. This happens because exact methods only cannot find almost optimized solutions but only exactly optimized solutions, and exactly optimized solutions may be very difficult to find. Optimizing this way thus seemed to be impractical.

Therefore, we propose heuristic methods to optimize the spooky pebble game solutions  provided by the SAT solver. These heuristics take as input a spooky pebble game strategy and output an optimized strategy with less pebbles, less ghosts or less (sequential) time. Using these optimizers we can find closer Pareto fronts, such that we can find more precise trade-offs between pebbles, ghosts and (sequential) time.
We developed 6 optimization algorithms, which take all a greedy local optimum for one optimization goal such as removing all useless pebbling moves. These optimizers are described in Algorithms~\ref{alg:seqT_optimizer_heuristic}~to~\ref{alg:replace_ghost_by_pebble_asap}. Table~\ref{tab:optimizer_algos} presents an overview of all optimizers and links them to their algorithms. Applying these basic optimizers one after the other gives an improved solution with respect to the input.

\begin{table}[tbh]
    \centering
    \caption{Overview of optimization algorithms.}
    \label{tab:optimizer_algos}
    \resizebox{\columnwidth}{!}{%
    \begin{tabular}{p{0.23\textwidth}  p{0.24\textwidth}  p{0.22\textwidth}  p{0.17\textwidth} p{0.12\textwidth}}
        \toprule
        \textbf{move type} & \textbf{optimization type} & \textbf{gain} & \textbf{possible cost} & \textbf{algorithm} \\
        \midrule
        pebble & delay & pebble reduction & - & \ref{alg:delay_pebble_placement}\\
        ghost & delay & ghost reduction & pebble & \ref{alg:delay_ghost_placement}\\
        unpebble & bring forward & pebble reduction & - & \ref{alg:expedite_unpebbling}\\
        unghost & bring forward & ghost reduction & pebble & \ref{alg:replace_ghost_by_pebble_asap}\\
        pebble + unpebble & remove moves & pebble reduction & - & \ref{alg:seqT_optimizer_heuristic}\\
        ghost + unghost & remove moves & ghost reduction & pebble & \ref{alg:remove_useless_ghostings}\\
        \bottomrule
    \end{tabular}%
    }
\end{table}

These optimization heuristics all follow the same strategy: search for a type of move (such as pebble, unpebble, ghost or unghost); try to delay the move (for pebble and ghost) or to bring the move forward (for unpebble and unghost). This delay or bringing forward of moves is only possible under some conditions, e.g. for {pebbling} moves, the inputs must be pebbled at a later time and the pebble is not used to pebble or unghost another vertex.

The use of these optimizers is the following. By delaying pebbling and ghosting moves, useless pebbles or ghosts are removed. For example, if a pebble can be placed at a later time, this pebble can be used to pebble another node or be removed entirely, reducing the number of needed pebbles. Delaying ghost moves gives the same reduction for ghosts. Note that delaying a ghost move may need more pebbles, as a pebble is needed for a longer time if a ghost move is delayed. A similar reduction of pebbles resp. ghosts can be obtained by bringing unpebble and unghost moves forward. 

The optimizers that delay pebbling and bring unpebbling forward can also be combined. This results in an optimizer that removes unused pebbles by removing both the pebble and unpebble move. A similar optimizer that combines delay of ghosting and bring forward of unghosting is also defined.

The complexity of these optimizer algorithms is only linear in the time of the solution and linear in the number of nodes and edges of the graph when storing the graph by adjacency lists. %
If the number of inputs or the number of outputs of a node is unbounded, some optimizer algorithms scale linearly with these as well. We thus see that the optimizers are efficient.

\defmath\used{\mathit{used}}
\defmath\pebbled{\mathit{pebbled}}
\defmath\unpebbled{\mathit{unpebbled}}
\defmath\ghosted{\mathit{ghosted}}
\defmath\unghosted{\mathit{unghosted}}

Given a pebbling strategy $((P_0,S_0), \dots,(P_T,S_T))$,
our algorithms use the following predicates for all vertices $v\in V$ and time steps $t\in [T]$.%
\begin{align*}
\pebbled_{v,t}	&\defn v \in (P_t \setminus P_{t-1}) \cap (V \setminus S_{t-1}) &
\ghosted_{v,t}	&\defn v \in S_t \setminus S_{t-1}\\
\unpebbled_{v,t}	&\defn v \in (P_{t-1} \setminus P_{t}) \cap (V \setminus S_t) &\unghosted_{v,t}	&\defn v \in S_{t-1} \setminus S_{t}
\\
\used_{v,t}	&\defn \pebbled_{v,t}\lor  \unpebbled_{v,t} \lor \unghosted_{v,t} \hspace{-8cm}
	\\
\end{align*}

\begin{algorithm}
\caption{Remove useless pebblings from strategy $((P_0,S_0), \dots,(P_T,S_T))$.}
\label{alg:seqT_optimizer_heuristic}
\begin{algorithmic}[1]
\For {$t = T, T-1,\dots,1$}
		\For {$v \in (P_{t-1} \setminus P_{t}) \cap (V \setminus S_t)$ }\Comment{For each unpebbling move at time step $t$}
			\For {$t_0 = t-1,\dots,1$}
                \If {$\unghosted_{v,t_0} \lor \exists (w,v) \in E\colon \used_{w,t_0} $} 
					\State \textbf{break}   \Comment{$v$ was unghosted or used at time step $t_0$}
                \EndIf
				\If {$\pebbled_{v,t_0}$}
					\State $P_i$ := $P_i \setminus \set v$ \textbf{for} $i\in [t_0, t]$
					\Comment{remove pebble $v$ at time steps $t_0$ to $t$}
                \EndIf
        \EndFor
    \EndFor
\EndFor
\end{algorithmic}
\end{algorithm}

\begin{algorithm}
\caption{Remove useless ghostings from strategy $((P_0,S_0), \dots,(P_T,S_T))$.}
\label{alg:remove_useless_ghostings}
\begin{algorithmic}[1]
    \For{$t = T,T-1,\dots,1$}
        \For{$v\in S_t \setminus S_{t-1}$} \Comment{For each ghosting move at time step $t$}
            \If{$\forall (v,w)\in E: w\in (P_{t-1} \cap P_{t})$} \Comment{all inputs of $v$ are pebbled}
                \State $t_0 := t$
                \While{$v\in S_{t_0}$} \Comment{replace ghosting by unpebbling: remove ghost}
                    \State $S_{t_0} := S_{t_0}\setminus \set{v} $ 
                    \State $t_0 := t_0 + 1$
                \EndWhile
            \EndIf
        \EndFor
    \EndFor
\end{algorithmic}
\end{algorithm}

\begin{algorithm}
\caption{Delay pebble placements from strategy $((P_0,S_0), \dots,(P_T,S_T))$.}
\label{alg:delay_pebble_placement}
\begin{algorithmic}[1]
    \For{$t = T,T-1,\dots,1$}
        \For{$v\in (P_t \setminus P_{t-1}) \cap (V \setminus S_{t-1})$} \Comment{For each pebbling move at time step $t$}
                \State $t_0 := t$
                \While{$v\in P_{t_0}$ \textbf{and} $t_0<T$}
                    \If{$\exists (w,v)\in E: \used_{w,t_0} \lor \used_{w,t_0+1}$} 
                        \State \textbf{break} \Comment{$v$ was used at timestep $t_0$ or $t_0+1$}
                    \EndIf

                    \If{$\forall (v,w)\in E: w\in (P_{t_0-1}\cap P_{t_0})$}
                        \Comment{all inputs of $v$ are pebbled}
                        \State $P_{i} := P_{i} \setminus \{v\}$ \textbf{for} $i\in [t,t_0-1]$ \Comment{shift pebble move on $v$ from $t$ to $t_0$}
                    \EndIf
                    \State $t_0 := t_0 + 1$
                \EndWhile
        \EndFor
    \EndFor
\end{algorithmic}
\end{algorithm}

\begin{algorithm}
\caption{Expedite unpebblings from strategy $((P_0,S_0), \dots,(P_T,S_T))$.}
\label{alg:expedite_unpebbling}
\begin{algorithmic}[1]
    \For{$t = 1,2,\dots,T$}
        \For{$v \in (P_{t-1} \setminus P_{t}) \cap (V \setminus S_t)$} \Comment{For each unpebbling move at time step $t$}
                \State{$t_0 := t-1$}
                \While{$v \in P_{t_0}$ \textbf{and} $v\not\in S_{t_0-1}$ \textbf{and} $t_0>0$ }
                    \If{$\exists (w,v)\in E: \used_{w,t_0} \lor \used_{w,t_0+1}$} 
                        \State \textbf{break} \Comment{$v$ was used at timestep $t_0$ or $t_0+1$}
                    \EndIf

                    \If{$\forall (v,w)\in E: w\in (P_{t_0-1}\cap P_{t_0})$}
                        \Comment{all inputs of $v$ are pebbled}
                        \State $P_{i} := P_{i} \setminus \{v\}$ \textbf{for} $i\in [t_0,t-1]$ \Comment{shift unpebble move on $v$ from $t$ to $t_0$}
                    \EndIf
                    \State $t_0 := t_0 - 1$

                \EndWhile
        \EndFor
    \EndFor
\end{algorithmic}
\end{algorithm}

\begin{algorithm}
\caption{Delay ghost placements from strategy $((P_0,S_0), \dots,(P_T,S_T))$.}
\label{alg:delay_ghost_placement}
\begin{algorithmic}[1]
    \For{$t = T,T-1,\dots,1$}
        \If{$|P_t| < P$} \Comment{If not all pebbles are used at time $t$}
            \For{$v\in S_t \setminus S_{t-1}$} \Comment{For each ghosting move at time step $t$}
                \State $P_t := P_t \cup \set{v}$ \Comment{remove ghost and place pebble on $v$ at time $t$}
                \State $S_t := S_t \setminus \set{v}$
                \If{$|P_t| \geq P$} \Comment{Check current availability of pebbles}
                    \State \textbf{break}
                \EndIf
            \EndFor
        \EndIf
    \EndFor
\end{algorithmic}
\end{algorithm}

\begin{algorithm}
\caption{Replace ghost by pebble as soon as possible from strategy $((P_0,S_0), \dots,(P_T,S_T))$.}
\label{alg:replace_ghost_by_pebble_asap}
\begin{algorithmic}[1]
    \For{$t = 1,2,\dots,T$}
        \If{$|P_t|<P$}
            \For{$v \in S_{t-1} \setminus S_{t}$} \Comment{For each unghosting move at time step $t$}
                    \State $t_0:=t-1$
                    \While{$v\in S_{t_0}$ \textbf{and} $t_0>0$}
                        \If{$|P_t|\geq P$} \Comment{Check current availability of pebbles}
                            \State \textbf{break}
                        \EndIf

                        \If{$\forall (v,w)\in E: w\in (P_{t_0-1}\cap P_{t_0})$}
                            \Comment{all inputs of $v$ are pebbled}
                            \For{$i\in [t_0,t-1]$ } \Comment{shift unghost move on $v$ from $t$ to $t_0$}
                                \State $P_{i} := P_{i} \cup \{v\}$ 
                                \State $S_{i} := S_{i} \setminus \{v\}$ 
                            \EndFor
                        \EndIf
                        \State $t_0 := t_0 - 1$
                    \EndWhile
            \EndFor
        \EndIf
    \EndFor
\end{algorithmic}
\end{algorithm}

\FloatBarrier
\section{Implementation and Experiment}
\label{sect:experiment}

We implemented our spooky game solver algorithm from Section~\ref{sect:pebblegameSolver} in Python. The code is open source and can be found in \cite{quist2025sourcecode}. The Z3 solver for satisfiability is used as SAT solver \cite{de2008z3}. %
The optimizer algorithms were implemented in Python.

\begin{table}[tbh]
    \centering
    \caption{Properties of the DAGs of the ISCAS85 benchmarks used to benchmark the spooky pebble game solver. }
    \begin{tabular}{cccc}
    \toprule
    name & vertices & roots & edges\\
    \midrule
    c432 & 172 & 7 & 260\\
    c499 & 177 & 32 & 246\\
    c880 & 276 & 26 & 374\\
    c1355 & 177 & 32 & 246\\
    c1908 & 193 & 25 & 257\\
    c2670 & 401 & 46 & 533\\
    c3540 & 830 & 22 & 1395\\
    c5315 & 1089 & 96 & 1503\\
    c6288 & 979 & 32 & 1639\\
    c7552 & 988 & 62 & 1584\\
    \bottomrule
    \end{tabular}
    \label{tab:ISCAS_information}
\end{table}

To test the performance of our implementation, we use an Intel Core
i7-6700 CPU with eight cores clocked at 3.40GHz. We use the well-known ISCAS85 benchmarks circuits~\cite{brglez1985neutral}. Using the open source tool mockturtle, we transformed the circuits via XOR-majority graphs (DAGs) to circuits with common quantum gates \cite{soeken2017design}. Table~\ref{tab:ISCAS_information} shows the characteristics of the resulting DAGs.

For all these benchmarks we measure the performance of the pebble game solver. The goal is to find a Pareto front of spooky pebble game solutions with optimal parameters (sequential) time, number of pebbles and number of ghosts. The procedure is described in detail in Algorithms~\ref{alg:pareto_front_searcher_for_benchmarks_algo}~and~\ref{alg:optimize_solution}. For various numbers of ghosts a solution is searched with a decreasing number of pebbles. For each constraint, the solver is run three times with different seeds. If at least one solution is found, the number of pebbles is decreased by five. If no solution is found, the number of pebbles ($P$) is not decreased anymore and the iteration with next number ghosts ($S$) is started.

\begin{algorithm}[tbhp]
\caption{Optimal solutions search for benchmarks.
Function spooky\_pebble\_game\_solver(pebbles, ghosts) is described in Algorithm~\ref{alg:spooky_solver_heuristic}.}
\label{alg:pareto_front_searcher_for_benchmarks_algo}
\begin{algorithmic}[1]
    \For {$S$ = vertices, vertices/5, vertices/10, vertices/20, 0}
        \State spooky\_pebble\_game\_solver($\infty$, $S$) \Comment{run solver five times}
        \State pb := minimal \#pebbles in found solutions
        \For{$P$ = pb, pb-5, pb-10, \dots}
            \State spooky\_pebble\_game\_solver($P$, $S$) \Comment{run solver five times}
            \If{no solution is found by solver}
                \State \textbf{break}
            \EndIf
        \EndFor
    \EndFor
\end{algorithmic}
\end{algorithm}

As runtime parameters for the solver (Algorithm~\ref{alg:spooky_solver_heuristic}), we use $t_{wait}=15$ seconds and $t_{max}=2$ minutes. The solver gives little results for these runtimes on large benchmarks.\footnote{Large benchmarks need many clauses and variables to encode one BMC iteration, and usually also take a larger pebbling time (more BMC iterations). Hence, their SAT formula is much larger, which makes it a harder job for the SAT solver to find a solution. }

Thus, we used a longer runtime for the large benchmarks. For the four largest benchmarks, $t_{wait}=60$ seconds and $t_{max}=8$ minutes is used. For all benchmarks, $T^{(skip)}$ is set to 5.

\begin{algorithm}[tbhp]
\caption{Transform parallel solution $((P_0,S_0), \dots,(P_T,S_T))$ with $P$ pebbles and $S$ ghosts obtained from SAT solver into sequential solution $((P'_0,S'_0), \dots,(P'_{T'},S'_{T'}))$.}
\label{alg:sequentializer}
\begin{algorithmic}
    \State $(P'_0,S'_0):=(P_0,S_0)$
    \State $T':=1$
    \For{$t=0,1,\dots,T-1$}
        \State pebbling, unpebbling, ghosting, unghosting = \emph{empty list}
        \For{$v\in V$}
            \If{$v\in P_{T+1}\setminus P_{T}$} \Comment{$v$ is pebbled}
                \State pebbling.push($v$)
            \EndIf
            \If{$v\in P_T\setminus P_{T+1}$} \Comment{$v$ is unpebbled}
                \State unpebbling.push($v$)
            \EndIf
            \If{$v\in S_{T+1}\setminus S_{T}$} \Comment{$v$ is ghosted}
                \State ghosting.push($v$)
            \EndIf
            \If{$v\in S_T\setminus S_{T+1}$} \Comment{$v$ is unghosted}
                \State unghosting.push($v$)
            \EndIf
        \EndFor
        \For{$v\in$unpebbling}
            \State $(P'_{T'},S'_{T'}) := (P'_{T'-1}\setminus \{v\},S'_{T'-1})$
            \State $T':=T'+1$
        \EndFor
        
        \While{$|$ghosting$|>0$ \textbf{or} $|$unghosting$|>0$}
            \If{$|P'_{T'-1}|\geq P$ \textbf{and} $|S'_{T'-1}|\geq S$} \Comment{ghost and unghost in one timestep}
                \State $v:=$unghosting.pop()
                \State $w:=$ghosting.pop()
                \State $(P'_{T'},S'_{T'}) := ((P'_{T'-1}\cup\{v\})\setminus \{w\},(S'_{T'-1}\cup\{w\})\setminus\{v\})$
                \State $T':=T'+1$
            \EndIf
            \If{$|$unghosting$|>0$ \textbf{and} $|P'_{T'-1}|<P$}
                \State $v:=$unghosting.pop()
                \State $(P'_{T'},S'_{T'}) := (P'_{T'-1}\cup\{v\},S'_{T'-1}\setminus\{v\})$
                \State $T':=T'+1$
            \EndIf

            \If{$|$ghosting$|>0$ \textbf{and} $|S'_{T'-1}|<S$}
                \State $v:=$ghosting.pop()
                \State $(P'_{T'},S'_{T'}) := (P'_{T'-1}\setminus\{v\},S'_{T'-1}\cup\{v\})$
                \State $T':=T'+1$
            \EndIf
        \EndWhile

        \For{$v\in$pebbling}
            \State $(P'_{T'},S'_{T'}) := (P'_{T'-1}\cup \{v\},S'_{T'-1})$
            \State $T':=T'+1$
        \EndFor
    \EndFor
\end{algorithmic}
\end{algorithm}

\begin{algorithm}[tbhp]
\caption{Optimize solution $((P_0,S_0), \dots,(P_T,S_T))$ obtained from SAT solver.
Other orders of optimizers are possible, this is random.}
\label{alg:optimize_solution}
\begin{algorithmic}[1]
    \State solution := $((P_0,S_0), \dots,(P_T,S_T))$
    \While{solution improved} \Comment{time, pebbles or ghosts reduced w.r.t. previous run}
        \State solution := remove\_ghostings(solution)
        \State solution := remove\_pebblings(solution)
        \State solution := delay\_pebbling(solution)
        \State solution := bring\_forward\_unpebbling(solution)
        \State solution := delay\_ghosting(solution)
        \State solution := bring\_forward\_unghosting(solution)
    \EndWhile
\end{algorithmic}
\end{algorithm}

For every solution the SAT solver found, we sequentialize the parallel solution from the SAT solver using Algorithm~\ref{alg:sequentializer}. After sequentialization, we run the optimizer algorithm as described in Algorithm~\ref{alg:optimize_solution} on the solution. For every pebbling strategy the SAT solver found, this algorithm is run five times, each time with a different order of optimizations. If the solution is not optimized by the six optimization operations, the optimization algorithm is stopped. 

\begin{table}[tbh]
    \centering
    \caption{Maximal runtimes of pebble game (PG) solver and optimization heuristic per benchmark. }
    \label{tab:Runtime_results}
    \begin{tabular}{cccc}
        \toprule
        name &  PG solver & \hspace*{.5em}& heuristic \\
        \midrule
        c432 & 121.3 && 13.7\\
        c444 & 111.3 && 11.5\\
        c880 & 117.8 && 44.5\\
        c1355 & 110.4 && 13.8\\
        c1908 & 108.8 && 19.4\\
        c2670 & 118.5 && 97.9\\
        c3540 & 464.6 && 814.5\\
        c5315 & 397.7 && 802.8\\
        c6288 & 361.9 && 1341.9\\
        c7552 & 46.2 && 146.1\\
        \bottomrule
    \end{tabular}
\end{table}

In Figure~\ref{fig:ParetoFront_results}, the results of the runs of the spooky pebble game solver are depicted. The color of the datapoint indicates the number of ghosts used by the solution. The large dots indicate the solutions found by the SAT solver. The smaller dots indicate the pebbling strategies found by optimizing the SAT solver solutions. Notice that there are many optimized pebbling strategies as all solutions found during Algorithm~\ref{alg:optimize_solution} are depicted.

\begin{figure}[htbp]
    \centering
    \includegraphics[trim={0 .5cm 0 0},width=\textwidth,height=\textheight-6em-1.32pt]{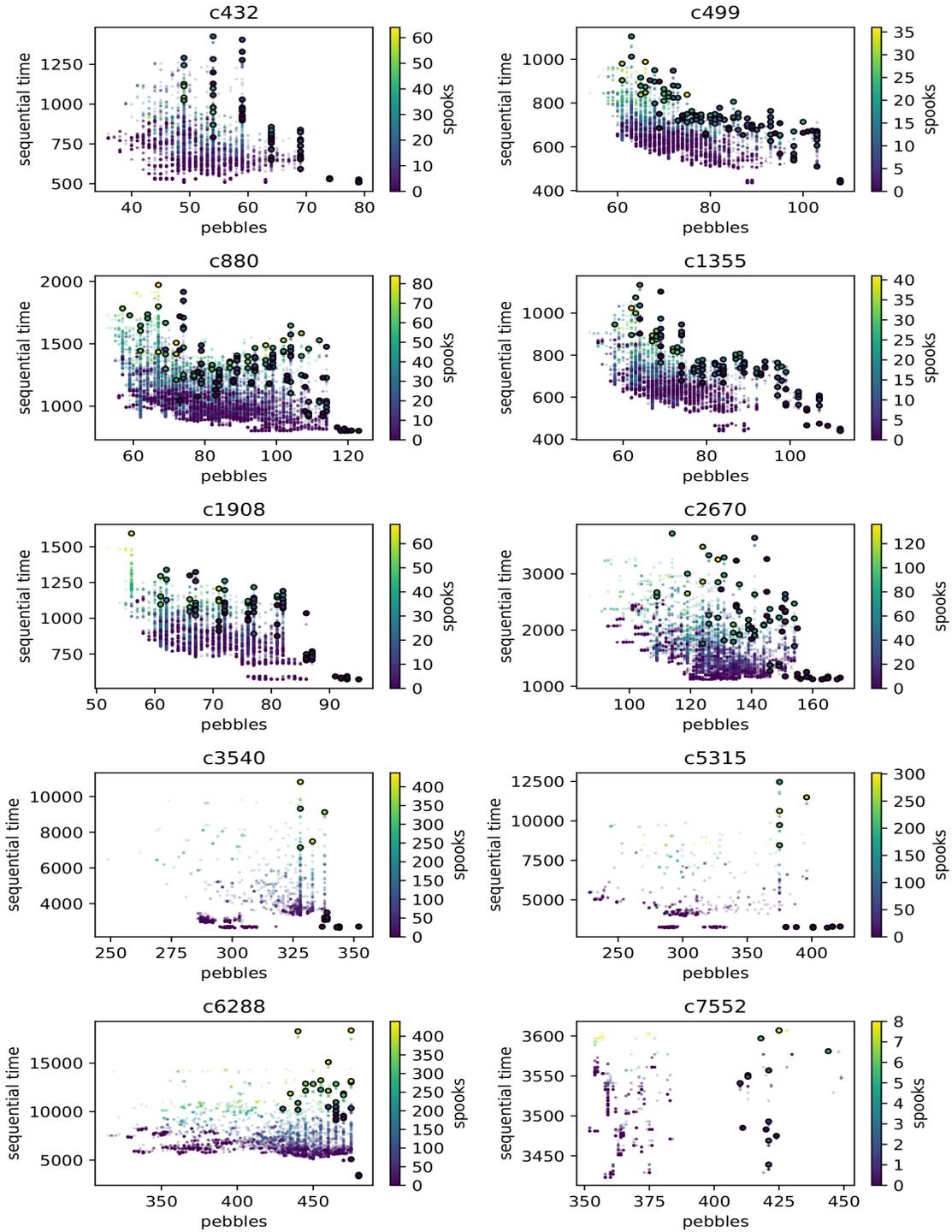}
    \caption{Solutions of the spooky pebble game with sequential time, number of pebbles and number of ghosts found by the SAT solver heuristic of Algorithm~\ref{alg:pareto_front_searcher_for_benchmarks_algo} (big dots) and their optimized solutions as found by the optimizer heuristics of Algorithm~\ref{alg:optimize_solution} (small dots).  }
    \label{fig:ParetoFront_results}
\end{figure}

Table~\ref{tab:Runtime_results} shows the maximal runtimes of the SAT solver and the optimizer to find a solution, respectively to optimize the solution. The maximal runtimes of the SAT solver are indeed bounded by 2 minutes (for the 6 smallest benchmarks) and 8 minutes (for the 4 largest benchmarks).\footnote{The runtime for c432 is marginally longer, as the time for loading the benchmark into the solver is also taken into account. This loading takes some additional seconds.} The maximal runtimes of the optimizers are strongly correlated to the maximal sequential time of the found solutions. This is because the time complexity of the optimizers goes linear in the sequential time of the solution. The optimizers can be sped up by implementing the optimizers in e.g. C or C++ instead of Python, as their execution involves extensive use of for-loops. Therefore, these long runtimes of the optimizers seem to be no bottleneck or problem for the total runtime of the solver.

The data in Figure~\ref{fig:ParetoFront_results} shows that in all cases the optimizers reduce the number of ghosts and sequential time of the pebbling strategies with respect to the strategies found by the SAT solver. Moreover, the number of pebbles is reduced by the optimizers. This reduction of pebbles is significant (up to 39\%). Thus, the SAT solver is a good starting point to find pebbling strategies, but the optimizers can improve the solution substantially.

These results show a well-shaped front of solutions without unexpected jumps or irregular discontinuities or non-convexity. This improves upon the results of the previous version of this paper~\cite{quist2023optimizing}. 

Moreover, we observe that many optimized solutions with minimal sequential time use (almost) no ghosts. This ghost reduction is much better than the SAT solver pebbling strategies, as provided in the previous version of this paper~\cite{quist2023optimizing}. Nevertheless, for most benchmarks the pebbling strategy with the least number of pebbles uses ghosts. Thus, optimization of quantum space is still obtained by using classical space.

For the cases where the pebbling strategy for a specific number of pebbles with minimal sequential time uses non-zero ghosts, the sequential time is much less than in similar cases with zero ghosts. See e.g. 62 pebbles in c880 or 67 in c1355. This indicates that not reducing the number of ghosts may reduce the sequential time. But as our optimization algorithms usually reduce the number of ghosts, this behavior is rarely observed.

\FloatBarrier
\section{Conclusions and further research}
\label{sect:conclusion}

In this paper, we explored the spooky pebble game, a model for the trade-off between quantum space, classical space and time in a quantum computation. Using the spooky pebble game, we proved that a classical calculation that uses space $C$ can be executed by a quantum computer using quantum space $C+m$ and some additional classical space, where $m$ is the space of the output. 
In addition, we showed that solving the spooky pebble game is \PSPACE-complete.

Moreover, we showed the design and implementation of a spooky pebble game solver, which gives a memory management strategy for a quantum computation with a limited amount of quantum space and classical space. This solver uses a SAT solver to initially find pebbling strategies. Using this exact SAT solver method, optimal solutions are very hard to find. The (non-optimal) strategies found by the SAT solver are combined with heuristics to optimize these moderate solutions into significantly better local optima. Thus the SAT solver and the heuristics are complementary: the SAT solver can only find moderate solutions, but the heuristics can optimize this by pebble reductions up to 39\%. This is not surprising as solving the spooky pebble game is \PSPACE-complete, and SAT solving is used here as an NP-oracle. Although solving a spooky pebble game is \PSPACE-complete, this solver runs in practice within limited runtime. This solver provides strategies that reduce quantum space when classical space is added. Moreover, with this solver we can find Pareto fronts for trade-offs between quantum space, classical space and time used to execute a circuit.

\paragraph{Further research} An idea for further research is to build more optimization heuristics for spooky pebble strategies. Currently used optimization strategies emphasize the reduction of ghosts, such that many optimized solutions use (almost) no ghosts. For example optimizers reducing the number of pebbles but increasing the number of ghosts can be developed to better optimize quantum space.

It would also be interesting to study other solving techniques for the spooky pebble game. The recently developed Incremental Property Directed Reachability~\cite{blankestijn2023incremental} and divide-and-conquer methods~\cite{zhang2025divide} would be worthwhile to apply to spooky pebble game solving, as they have promising results for solving the reversible pebble game. 

Another idea for further research is to investigate in which contexts the spooky pebble game and measurement-based uncomputation can be used. Our main motivation in this paper to study these concepts was simulation of classical algorithms on a superposition of inputs on a quantum computer. It would be interesting to investigate the extent to which these techniques can also be used for broader classes of quantum algorithms. 

Finally, it would be worthwhile to investigate other quantum space reduction techniques and combine them with spooky pebble games. A promising candidate is Quantum Catalytic Space~\cite{buhrman2025quantum}. Catalytic space has already had great impact on the classical setting~\cite{cook2024tree}, so developing a pebble game formulation for the quantum setting and integrating it into the spooky pebbling framework would be a promising next step.

\paragraph{Acknowledgements.} This publication is part of the project Divide \& Quantum  (with project number 1389.20.241) of the research programme NWA-ORC which is partly financed by the Dutch Research Council (NWO).

\bibliographystyle{alphaurl}
\bibliography{Bibliography.bib}

\end{document}